\newtheorem{thm}{Theorem}
\newtheorem{lem}{Lemma}
\newtheorem{prop}{Proposition}
\newtheorem{exam}{Example}
\theoremstyle{definition}
\theoremstyle{remark}
\newtheorem{rem}{Remark}
\newcommand{\dint}{\displaystyle\int}
\newcommand{\bx}{x}
\numberwithin{equation}{section} \numberwithin{lem}{section}
\numberwithin{thm}{section} \numberwithin{prop}{section}
\numberwithin{cor}{section} \numberwithin{rem}{section}
\begin{document}

\title[Exact criterion for Degenerate Keller-Segel system]{Exact criterion for global existence and blow up to a degenerate Keller-Segel system}

\author{Li Chen}
\address{Department of Mathematical Sciences, Tsinghua University,
Beijing, 100084, People's Republic of China}
\email{lchen@math.tsinghua.edu.cn}
\thanks{Li Chen is partially supported by National Natural Science
Foundation of China (NSFC), grant number 11271218, 11011130029}

\author{Jinhuan Wang}\thanks{Corresponding author: Jinhuan Wang,~School of Mathematics, Liaoning University, Shenyang, 110036}
\address{School of Mathematics, Liaoning University, Shenyang 110036\quad
Tel: +86 024 62202209\quad Fax: +86 024 62202209, P. R. China and
Department of Mathematical Sciences, Tsinghua University,
Beijing, 100084, P. R. China}
\email{jhwang@math.tsinghua.edu.cn}
\thanks{Jinhuan Wang is partially supported by National Natural Science
Foundation of China (Grant no. 11126128, 11301243), China Postdoctoral Science Foundation (Grant no. 20110490409)}

\maketitle
\date{}

\begin{abstract}
A degenerate Keller-Segel system with diffusion exponent $m$ with $\frac{2n}{n+2}<m<2-\frac{2}{n}$ in multi dimension is studied. An exact criterion for global existence and blow up of solution is obtained. The estimates on $L^{\frac{2n}{n+2}}$ norm of the solution play important roles in our analysis. These estimates are closely related to the optimal constant in Haddy- Littlewood- Sobolev inequality. In the case of initial free energy less than a universal constant which depends on the inverse of total mass, there exists a constant such that if the $L^{\frac{2n}{n+2}}$ norm of initial data is less than this constant, then the weak solution exists globally; if the $L^{\frac{2n}{n+2}}$ norm of initial data is larger than the same constant, then the solution must blow-up in finite time. Our result shows that the total mass, which plays the deterministic role in two dimension case, might not be an appropriate criterion for existence and blow up discussion in multi-dimension, while the $L^{\frac{2n}{n+2}}$ norm of the initial data and the relation between initial free energy and initial mass are more important.
\end{abstract}

{\small {\bf Keywords:}
Nonlinear diffusion, nonlocal aggregation, global existence, blow-up.}

\section{Introduction}
In this article, we will study a degenerate Keller-Segel system for $n\geq 3$
dimension:
\begin{eqnarray}  \label{equs}
\left\{
\begin{array}{llll}
\smallskip
&\rho_t=\Delta \rho^m -{\rm div}(\rho\nabla c),&& x\in \mathbb R^n,\,t\geq 0,\\
&-\Delta c= \rho,&&  x\in \mathbb R^n,~t\geq 0,\\
&\rho(x,0)=\rho_{0}(x), && x\in \mathbb R^n,
\end{array}
\right.
\end{eqnarray}
where diffusion exponent $m\in (\frac{2n}{n+2},2-\frac{2}{n})$, $\rho(x,t)$ represents the density of bacteria and $c(x,t)$ represents the chemical substance concentration. Mass conservation of the system implies $\|\rho(\cdot,t)\|_{L^1}=\|\rho_0(\cdot)\|_{L^1}=M_0$.

Keller-Segel system with linear diffusion was proposed by Patlak \cite{Patlak} and Keller-Segel \cite{KS1,KS2}.
 It is used to describe the collective motion of cells or the evolution of the density of bacteria. This model plays important roles in the study of chemotaxis in mathematical biology. Since 1980, Keller-Segel system was widely studied in the literature. From the work by Childress \cite{Childress}, we known that the behavior of this model strongly depends on the space dimension, the readers are referred to two surveys given by Horstmann \cite{H1,H2}.

Recently, many mathematicians are interested in finding the criterion for global existence and blow up of solution to Keller-Segel type systems. In particular, the 2-dimensional case has been well studied. It is well known that $8\pi$ is the critical mass of 2-dimensional Keller-Segel system \cite{DP,Perthamebook,BDP}. More precisely, if the initial mass $M_0<8\pi$, then there exists global weak solution; if $M_0>8\pi$, then the solution blows up in finite time; The more delicate case $M_0=8\pi$ was studied in \cite{BCC10,BCM08}.

In dimension $n\geq 3$, one has to use nonlinear diffusion to balance the nonlocal aggregation effect. A natural question is to find a criterion for initial data to separate the global existence and finite time blow up to degenerate Keller-Segel system (\ref{equs}) with diffusion exponent $m>1$.

There were two critical diffusion exponents of (\ref{equs}) which have been studied recently. One is that $m^*=2-\frac{2}{n}$, which came from the scaling invariance of the total mass. The following results were obtained in \cite{S1,SK}. If $m>m^*$, the solution exists globally for any initial data; if $1<m\leq m^*$, both global existence and blow-up can happen for some initial data. Later on, Blanchet-Carrillo-Laurencot in \cite{BCL09} studied the degenerate system with diffusion exponent $m=m^*$, a critical mass was given there. Another critical exponent of (\ref{equs}), $m_c=\frac{2n}{n+2}$ was given in \cite{CLW1}, which came from the conformal invariance of the free energy. The authors in \cite{CLW1} showed that $L^{m_c}$ norm of a family of positive stationary solution can be viewed as the criterion for the global existence and blow up of solutions.

In this paper we are interested in finding a criterion to classify the initial data to get either global existence or blow up of the solution. Our analysis will work for all the diffusion exponents $m$ such that $\frac{2n}{n+2}=m_c<m<m^*=2-\frac{2}{n}$.

There are two very important quantities of system \eqref{equs}. One is the total mass which is time independent,
\begin{eqnarray*}
\dint_{\mathbb R^n} \rho(x,t)dx = \dint_{\mathbb R^n} \rho_0(x) dx =M_0,
\end{eqnarray*}
the other is the free energy
\begin{eqnarray*}
{\mathcal F}(\rho)=\dfrac{1}{m-1}\int_{\mathbb R^n}\rho^m( x,t)dx - \dfrac{1}{2}\int_{\mathbb R^n} \rho(x,t)c(x,t)d x,
\end{eqnarray*}
which decays in time due to the following entropy-entropy production relation
\begin{eqnarray*}
\dfrac{d}{dt} \mathcal{F} (\rho(\cdot, t)) +\dint_{\mathbb{R}^n} \rho \Big|\nabla(\dfrac{m}{m-1}\rho^{m-1}-c)\Big|^2 dx =0.
\end{eqnarray*}

The main result of this paper is
\begin{thm}\label{thm}
Assume that the initial density $\rho_0\in L^1_+(\mathbb R^{n})\cap L^{m} (\mathbb R^{n})$ and ${\mathcal F}(\rho_0)<{\mathcal F}^*$, the following holds,
\begin{enumerate}
\item
If $\|\rho_0\|_{L^{\frac{2n}{n+2}} (\mathbb R^{n})}< (s^*)^{\frac{n-2}{2n(m-1)}}$, then \eqref{equs} has a global weak solution, i.e. for all $T>0$, there is a function $\rho(x,t)$ with (for some $1<r,s\leq 2$),
\begin{align*}
&\rho\in L^\infty(0,+\infty; L^1_+(\mathbb R^{n})\cap L^{m}(\mathbb R^n)),\\
& \nabla \rho \in L^2(0,T;L^r(\mathbb R^n)), \quad \partial_t\rho\in L^2(0,T;W^{-1,s}_{\rm loc}(\mathbb R^n)),
\end{align*}
such that it satisfies \eqref{equs} in the sense of distribution.
\item
If $\|\rho_0\|_{L^{\frac{2n}{n+2}}(\mathbb{R}^n)}>(s^*)^{\frac{n-2}{2n(m-1)}}$ and $\rho_0$ has finite second moment, $\rho(x,t)$ is a solution of \eqref{equs}, then there exists a $T^*>0$ such that
\begin{eqnarray}\label{Lmblowup}
\displaystyle\lim_{t\rightarrow T^*}\|\rho(\cdot,t)\|_{L^m(\mathbb{R}^n)}=+\infty.
\end{eqnarray}
\end{enumerate}
Here ${\mathcal F}^*$ and $s^*$ are universal constants given by
\begin{eqnarray}
\label{F*}{\mathcal F}^*&=&\dfrac{2-\frac{2}{n}-m}{(m-1)(1-\frac{2}{n})}\Big(\frac{2n^2\alpha(n)}{C(n)}\Big)^{\frac{n(m-1)}{2n-2-mn}}M_0^{\frac{2n-m(n+2)}{2n-2-mn}}>0,\\ \label{s*} s^*&=&\Big(\dfrac{2n^2\alpha(n)M_0^{\frac{2n-m(n+2)}{n-2}}}{C(n)}\Big)^{\frac{n(m-1)}{2n-2-mn}}>0,
\end{eqnarray}
where $M_0$ is the initial mass $\|\rho_0\|_{L^1(\mathbb{R}^n)}$, $\alpha(n)=\frac{\pi^{n/2}}{\Gamma(\frac{n}{2}+1)}$ is the volume of the unit ball of $\mathbb{R}^n$ and $C(n)$ is the best constant of Hardy-Littlewood-Sobolev inequality, see (\ref{Cn}).
\end{thm}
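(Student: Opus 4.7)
My plan is to reduce both parts to a single polynomial inequality in the variable $z=\|\rho\|_{L^{p}}$, with $p=\tfrac{2n}{n+2}$. Writing $c=(n(n-2)\alpha(n))^{-1}|x|^{2-n}\ast\rho$, the Hardy-Littlewood-Sobolev inequality bounds the aggregation term by
\[
\tfrac{1}{2}\!\int_{\mathbb{R}^n}\!\rho c\,dx\ \le\ \tfrac{C(n)}{2n(n-2)\alpha(n)}\,\|\rho\|_{L^p}^{2},
\]
and the interpolation $\|\rho\|_{L^p}\le M_0^{\theta}\|\rho\|_{L^m}^{1-\theta}$ with $\theta=\tfrac{m(n+2)-2n}{2n(m-1)}$ (valid since $m>\tfrac{2n}{n+2}$) yields $\|\rho\|_{L^m}^{m}\ge M_0^{-m\theta/(1-\theta)}\|\rho\|_{L^p}^{m/(1-\theta)}$. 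Substituting both into $\mathcal{F}$ gives the lower bound
\[
\mathcal{F}(\rho)\ \ge\ f(\|\rho\|_{L^p}),\qquad f(z):=\tfrac{M_{0}^{-m\theta/(1-\theta)}}{m-1}\,z^{2n(m-1)/(n-2)}-\tfrac{C(n)}{2n(n-2)\alpha(n)}\,z^{2}.
\]
Since $m<2-\tfrac{2}{n}$ forces $\tfrac{2n(m-1)}{n-2}<2$, $f$ starts at $0$, attains a unique interior maximum, and falls to $-\infty$ afterwards. A direct calculation matches this maximizer with $(s^\ast)^{(n-2)/(2n(m-1))}$ and the maximum value with $\mathcal{F}^\ast$ as defined in the statement.

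\textbf{Part (1), global existence.} Since $\mathcal{F}(\rho(\cdot,t))$ is non-increasing by the entropy-entropy-production identity stated in the introduction, the hypotheses $\mathcal{F}(\rho_0)<\mathcal{F}^\ast$ and $\|\rho_0\|_{L^p}<(s^\ast)^{(n-2)/(2n(m-1))}$ together with $f(\|\rho(\cdot,t)\|_{L^p})\le\mathcal{F}(\rho(\cdot,t))\le\mathcal{F}(\rho_0)$ trap $\|\rho(\cdot,t)\|_{L^p}$ on the left branch of the sublevel set $\{f<\mathcal{F}(\rho_0)\}$ for all $t$. This is a standard continuity argument, implemented on an approximate scheme along which $t\mapsto\|\rho(\cdot,t)\|_{L^p}$ is continuous. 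The free-energy monotonicity then promotes this to a uniform bound on $\|\rho\|_{L^m}$. Starting from a non-degenerate regularization (e.g.\ replacing $\rho^m$ by $(\rho+\varepsilon)^m-\varepsilon^m$, or adding vanishing viscosity), I would propagate these uniform estimates in $\varepsilon$, use the dissipation term to control $\int_{0}^{T}\!\int\rho\bigl|\nabla(\tfrac{m}{m-1}\rho^{m-1}-c)\bigr|^{2}$, extract the required $L^2_tL^r_x$-bound on $\nabla\rho$ via Gagliardo-Nirenberg, and use Aubin-Lions compactness to pass to the limit in the distributional formulation.

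\textbf{Part (2), blow-up.} For a solution smooth enough to justify the calculation, I track $I(t)=\int|x|^2\rho\,dx$. Using $\Delta|x|^2=2n$ and the symmetrization identity $\int x\cdot\rho\nabla c\,dx=-\tfrac{n-2}{2}\int\rho c\,dx$, one obtains
\[
\tfrac{dI}{dt} \;=\; 2n\!\int\rho^m\,dx - (n-2)\!\int\rho c\,dx \;=\; 2(n-2)\,\mathcal{F}(\rho) - \tfrac{2(2n-2-mn)}{m-1}\|\rho\|_{L^m}^{m},
\]
after eliminating $\int\rho c$ through the definition of $\mathcal{F}$. The same trapping argument on the right-hand branch of $\{f<\mathcal{F}(\rho_0)\}$ forces $\|\rho(\cdot,t)\|_{L^p}>(s^\ast)^{(n-2)/(2n(m-1))}$ throughout the existence interval, and the interpolation read the other way gives $\|\rho\|_{L^m}^{m} \ge s^\ast M_0^{-m\theta/(1-\theta)}$. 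The algebraic identity $\tfrac{2(2n-2-mn)}{m-1}\cdot s^\ast M_0^{-m\theta/(1-\theta)}=2(n-2)\mathcal{F}^\ast$, checked against the formulas for $\mathcal{F}^\ast$ and $s^\ast$ in the statement, collapses the bound to
\[
\tfrac{dI}{dt}\ \le\ 2(n-2)\bigl(\mathcal{F}(\rho_0)-\mathcal{F}^\ast\bigr)<0,
\]
and integrating contradicts $I(t)\ge 0$ beyond some finite time. The solution therefore cannot be extended past some $T^\ast<\infty$, and the local theory developed in Part (1) identifies the only possible obstruction as $\|\rho(\cdot,t)\|_{L^m}\to\infty$.

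\textbf{Main obstacle.} The bulk of the analytical work lies not in the free-energy estimate above -- a clean pairing of HLS and interpolation -- but in reconciling the formal manipulations with the regularity of weak solutions: constructing an approximation scheme in which $\mathcal{F}$ is genuinely dissipated, $t\mapsto\|\rho(\cdot,t)\|_{L^p}$ is continuous, and the second-moment identity is rigorously justified. The algebraic tuning -- that the threshold $(s^\ast)^{(n-2)/(2n(m-1))}$ is simultaneously the maximizer of $f$, the saturation point of HLS after interpolation, and the precise value that reduces $\tfrac{dI}{dt}$ to $2(n-2)(\mathcal{F}(\rho_0)-\mathcal{F}^\ast)$ -- depends crucially on the conformal choice $p=\tfrac{2n}{n+2}$.
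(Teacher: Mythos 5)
Your strategy coincides with the paper's in all essential respects: the same Hardy--Littlewood--Sobolev-plus-interpolation lower bound $f(\|\rho\|_{L^{2n/(n+2)}})\le\mathcal{F}_1(\rho)\le\mathcal{F}(\rho)$, the same concavity/trapping argument on the two branches of $f$ (the paper's Lemmas \ref{upbounded} and \ref{lowbounded}, stated for the regularized solution in the existence part), the same regularization and Aubin--Lions compactness for part (1), and the same second-moment identity $\frac{d}{dt}m_2=\big(2n-\frac{2(n-2)}{m-1}\big)\int\rho^m+2(n-2)\mathcal{F}(\rho)$ for part (2). Your algebraic identity $\frac{2(2n-2-mn)}{m-1}\,s^*M_0^{-m\theta/(1-\theta)}=2(n-2)\mathcal{F}^*$ checks out and yields the uniform bound $\frac{d}{dt}m_2\le 2(n-2)(\mathcal{F}(\rho_0)-\mathcal{F}^*)<0$; the paper extracts the same strict negativity slightly differently, through the constant $\mu_2>1$ of Lemma \ref{lowbounded}, but the two computations are equivalent.

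The one genuine gap is the final step of part (2). The assertion to be proved is precisely $\lim_{t\to T^*}\|\rho(\cdot,t)\|_{L^m}=+\infty$, and you deduce it by claiming that ``the local theory developed in Part (1) identifies the only possible obstruction as $\|\rho\|_{L^m}\to\infty$.'' But Part (1) is a global existence theorem for subcritical data obtained by compactness; neither your sketch nor the paper sets up a local well-posedness theory with a continuation criterion in $L^m$, so non-extendability alone does not give $L^m$ blow-up. The paper closes this with an elementary extra ingredient you are missing: splitting $M_0=\int\rho$ over $B_R$ and $B_R^c$ and optimizing in $R$ gives
\begin{equation*}
M_0\;\le\; C\,\|\rho\|_{L^m}^{\frac{2m}{(m-1)n+2m}}\,m_2(t)^{\frac{n(m-1)}{(m-1)n+2m}},
\end{equation*}
so the second moment being driven to $0$ in finite time (which your differential inequality provides) forces $\|\rho(\cdot,t)\|_{L^m}\to\infty$ at some $T^*\le T$ with mass held fixed. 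You need this inequality, or an honest continuation criterion, to reach the stated conclusion.
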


\begin{rem}
The result does not hold for $m=m^*=2-\frac{2}{n}$, thus there is no contradiction with the result by Blanchet \emph{et al}. in \cite{BCL09}, where a critical mass was obtained.
\end{rem}

\begin{rem}
    The conditions $\rho_0\in L^1_+(\mathbb R^{n})\cap L^{m} (\mathbb R^{n})$ and $\|\rho_0\|_{L^{\frac{2n}{n+2}} (\mathbb R^{n})}< (s^*)^{\frac{n-2}{2n(m-1)}}$ for the existence result imply that the initial free energy is positive, i.e. ${\mathcal F}(\rho_0)> 0$, which can be easily checked by direct computations. Conversely, if the initial free energy is negative, i.e. ${\mathcal F}(\rho_0)< 0$ and $\rho_0\in L^1_+(\mathbb R^{n})\cap L^{m} (\mathbb R^{n})$, then $\|\rho_0\|_{L^{\frac{2n}{n+2}} (\mathbb R^{n})}> (s^*)^{\frac{n-2}{2n(m-1)}}$.  Therefore, our result on the blow-up of solutions allows more initial data than those in the work by Sugiyama. Thus the blow up result improves her work with $\gamma=0$. (In \cite{S1}, Y. Sugiyama proved that if the initial free energy is negative and $\rho_0\in L^1_+(\mathbb R^{n})\cap L^{m} (\mathbb R^{n})$, then the solution to the degenerate Keller-Segel with Bessel potential blows up in finite time.) In fact, Theorem \ref{thm} gives an exact classification of the initial data so that the solution either exists globally or blow-up in finite time. More precisely, it is the constant $(s^*)^{\frac{n-2}{2n(m-1)}}$, where $s^*$ is stated in (\ref{s*}), which classifies the initial data in $L^{\frac{2n}{n+2}}$ norm.
\end{rem}

\begin{rem} The exponents of $M_0$ in \eqref{F*} and \eqref{s*} are both negative due to the fact that $\frac{2n}{n+2}<m<2-\frac{2}{n}$. The assumption ${\mathcal F}(\rho_0)<{\mathcal F}^*$ in Theorem \ref{thm} gives a relation between the initial mass and the initial free energy, i.e.
\begin{eqnarray}\label{relation}
{\mathcal F}(\rho_0)M_0^{\frac{m(n+2)-2n}{2n-2-mn}}< \dfrac{2-\frac{2}{n}-m}{(m-1)(1-\frac{2}{n})}\Big(\frac{2n^2\alpha(n)}{C(n)}\Big)^{\frac{n(m-1)}{2n-2-m n}}.
\end{eqnarray}
As a conclusion, Theorem \ref{thm} implies that {\it the initial mass itself might not be an important quantity in the existence and blow up analysis in multi-dimension}. More precisely, no matter how small the initial mass is, the solution can still blow up in case that $\|\rho_0\|_{L^\frac{2n}{n+2}}>(s^*)^{\frac{n-2}{2n(m-1)}}$. No matter how large the initial mass is, there still exists a global weak solution if $\|\rho_0\|_{L^\frac{2n}{n+2}}<(s^*)^{\frac{n-2}{2n(m-1)}}$. The similar fact that the initial mass
is not a relevant quantity for blow-up in the multi-dimensional
Keller-Segel model is known in the literature, such as in \cite{CPZ} where (\ref{equs}) with $m=1$ was considered. Moreover, we can find a consistent phenomenon with this result in parabolic-parabolic model, such as in \cite{Winkler,CP}. In \cite{CP}, the norm of $\|\rho_0\|_{L^{\frac{n}{2}}}$ was used to discuss existence and blow-up. The author in \cite{Winkler} studied the case with smooth bounded domain with homogeneous Neumann boundary conditions, then obtained the existence result for small initial data in $L^q$, $q>\frac{n}{2}$ and if the domain is ball, there is always an unbounded solution developed from initial data with arbitrary small mass.

\begin{exam}\label{example1}
For given $\varepsilon_0>0$ arbitrarily small, let the initial data be
$$
\rho_0(x)=\left\{\begin{array}{ll} \varepsilon_0 \frac{K^n}{\alpha(n)}, & |x|\leq \frac{1}{K},\\ 0, & |x|>\frac{1}{K},\end{array}\right.
$$
where $K$ to be determined later.
Then
$$
\|\rho_0\|_{L^1}=\varepsilon_0, \quad\|\rho_0\|_{L^{\frac{2n}{n+2}}}=\varepsilon_0 \Big(\frac{K^n}{\alpha(n)}\Big)^{\frac{n-2}{2n}} \quad \mbox{ and }\quad \int_{\mathbb {R}^n}|x|^2\rho_0dx<\infty.
$$
Now we can choose $K$ large such that
\begin{eqnarray}\label{examrelmass}
\|\rho_0\|_{L^{\frac{2n}{n+2}}}>(s^*)^{\frac{n-2}{2n(m-1)}},
\end{eqnarray}
and
\begin{eqnarray}\label{examrelentropy}
{\mathcal F}(\rho_0)M_0^{\frac{m(n+2)-2n}{2n-2-mn}}< \dfrac{2-\frac{2}{n}-m}{(m-1)(1-\frac{2}{n})}\Big(\frac{2n^2\alpha(n)}{C(n)}\Big)^{\frac{n(m-1)}{2n-2-mn}}.
\end{eqnarray}
Therefore according to our result in theorem \ref{thm}, the solution must blow up in finite time.

We will give a detailed calculation of this example in the Appendix.

Similarly, we can find some initial data with large initial mass such that the solution exist globally.
\end{exam}

\end{rem}

It should also be mentioned that the constants appeared in the main result have close relation to the critical Hardy-Littlewood-Sobolev inequality. For completeness, we cite this result from  \cite{LiebAnalysis}.

\begin{prop} [H.-L.-S. inequality]\label{propHLS}
Let $\rho\in L^{\frac{2n}{n+2}}({\mathbb R^n})$, then
\begin{eqnarray}\label{HLSinequatlity}
\dint\dint_{\mathbb
R^n\times \mathbb
R^n}\dfrac{\rho(x)\rho(y)}{|x-y|^{n-2}} dx dy \leq C(n)
\|\rho\|_{L^{\frac{2n}{n+2}}}^2,
\end{eqnarray}
where
\begin{eqnarray} \label{Cn}
C(n)=\pi^{(n-2)/2}\dfrac{1}{\Gamma(n/2+1)}\left\{\dfrac{\Gamma(n/2)}{\Gamma(n)}\right\}^{-2/n}.
\end{eqnarray}
Moreover, the equality holds if and only
if $\rho(x)=A U_{\lambda,x_0}$, for some constant $A$ and parameters $\lambda>0$, $x_0\in \mathbb{R}^n$,
where
\begin{equation}\label{stationarysolutionu}
U_{\lambda,x_0}=2^{\frac{n+2}{4}}n^{\frac{n+2}{2}}
\left(\frac{\lambda}{\lambda^2+| x- x_0|^2}\right)^{\frac{n+2}{2}}.
\end{equation}
\end{prop}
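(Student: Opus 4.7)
This is the sharp Hardy--Littlewood--Sobolev inequality in the conformally invariant diagonal case $p=q=\frac{2n}{n+2}$, established by Lieb in 1983. I do not plan to reprove it from scratch; instead I would outline Lieb's three-step argument---reduction by symmetrization, compactness of a maximizing sequence, identification of the extremizer via the conformal group---and then recover $C(n)$ by direct computation.

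First I would apply Riesz's rearrangement inequality to the bilinear form
\[
J(\rho)=\int_{\mathbb{R}^n}\int_{\mathbb{R}^n}\frac{\rho(x)\rho(y)}{|x-y|^{n-2}}\,dx\,dy.
\]
Because the kernel $|x-y|^{2-n}$ is itself radially symmetric and decreasing, $J(\rho)\le J(\rho^*)$, while the $L^{2n/(n+2)}$ norm is unchanged under $\rho\mapsto\rho^*$. Hence the best constant is attained, if at all, on radial decreasing functions. The ratio $\Phi(\rho):=J(\rho)/\|\rho\|_{L^{2n/(n+2)}}^2$ is moreover invariant under translations, dilations, and, once transferred to $S^n$ by stereographic projection with the appropriate Jacobian weight, under the full conformal group of $S^n$. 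On a radial maximizing sequence $\rho_k$ I would use the dilation freedom to fix $\int_{|x|\le 1}\rho_k=\tfrac12\|\rho_k\|_{L^1}$ and then extract a weak $L^{2n/(n+2)}$ limit; radial monotonicity together with this normalization rules out both concentration at the origin and escape to infinity, so the limit is a nonzero maximizer $\rho_*$.

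To identify $\rho_*$ I would invoke the Carlen--Loss competing-symmetries argument: let $S$ denote symmetric decreasing rearrangement on $\mathbb{R}^n$ and let $\tau$ denote the action, pulled back through stereographic projection, of the conformal inversion about the pole antipodal to the origin on $S^n$. Both $S$ and $\tau$ weakly increase $\Phi$, so iterating $S\tau$ converges in $L^{2n/(n+2)}$ to a common fixed point, which must simultaneously be radial decreasing and invariant under the inversion. This forces the fixed point to be a constant multiple of $U_{1,0}$ as displayed in \eqref{stationarysolutionu}, and translation/dilation invariance then recovers the full extremizer family $U_{\lambda,x_0}$. The constant $C(n)$ is finally recovered by evaluating $\Phi(U_{1,0})$, which after the stereographic transfer reduces to a constant-function computation on $S^n$ and standard Beta/Gamma identities. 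The main obstacle is the middle step: preventing loss of compactness for a scale-invariant functional under the non-compact translation--dilation action. Breaking the scaling symmetry by the radial median normalization is the key technical device, and the rest is classical rearrangement theory plus an explicit integral.
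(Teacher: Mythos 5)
The paper does not prove this proposition at all: it is stated as a citation from Lieb--Loss, \emph{Analysis} (reference \cite{LiebAnalysis}), since the sharp diagonal-case HLS inequality is a deep classical result whose proof lies outside the scope of the article. Your outline is therefore doing strictly more than the paper, and it is a faithful account of the standard route (Lieb 1983, as streamlined by Carlen--Loss): Riesz rearrangement to reduce to radial decreasing functions, breaking the dilation symmetry by a median normalization to obtain a maximizer, competing symmetries to identify it with $U_{\lambda,x_0}$, and an explicit computation on $S^n$ for $C(n)$; the exponent check ($\lambda=n-2$ paired with $p=q=\frac{2n}{n+2}$ is exactly the conformally invariant case) and the resulting Gamma-function expression both match \eqref{Cn}. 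Two points where your sketch is thinner than a complete argument: (i) passing from a weak $L^{2n/(n+2)}$ limit of the normalized radial maximizing sequence to the statement that the limit actually attains the supremum requires more than weak convergence, since $J$ is not weakly continuous --- one needs Helly-type pointwise convergence of the monotone profiles together with a Brezis--Lieb (missing mass) argument, or Lieb's original layer-cake estimates; (ii) the ``only if'' half of the equality statement (that \emph{every} extremizer is of the form $AU_{\lambda,x_0}$) is not delivered by competing symmetries alone, which only produces the extremizing family; it requires the strict form of the Riesz rearrangement inequality to force an arbitrary extremizer to be a translate of a radial decreasing one, and then the uniqueness analysis on the sphere. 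For the purposes of this paper, citing the result as the authors do is the appropriate choice, but your plan is a correct map of how the cited theorem is actually proved.
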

This family of radially symmetric functions \eqref{stationarysolutionu} is also a class of stationary solution of the degenerate system \eqref{equs} with diffusion exponent $m=m_c=\frac{2n}{n+2}$. The readers are referred to \cite{CLW1} for the relations among stationary solution, Haddy-Littlewood-Sobolev inequality and conformal invariance of the free energy.
A direct scaling analysis tells us that $L^{\frac{2n}{n+2}}$ norm of $U_{\lambda, x_0}$ is a universal constant independent of the parameters $\lambda$ and $x_0$.

We can separate the free energy into two parts by using Haddy-Littlewood-Sobolev inequality \eqref{HLSinequatlity}, namely,
\begin{eqnarray*}
{\mathcal F}(\rho)&=&\dfrac{1}{m-1}\int_{\mathbb R^n}\rho^m( x,t)dx-\dfrac{C(n)}{2(n-2)n\alpha(n)}\|\rho\|^2_{L^{\frac{2n}{n+2}}}\\
&&+\dfrac{C(n)}{2(n-2)n\alpha(n)}\|\rho\|^2_{L^{\frac{2n}{n+2}}}-\dfrac{1}{2(n-2)n\alpha(n)}\int\int_{\mathbb R^n\times\mathbb R^n}\dfrac{\rho(x,t)\rho(y,t)}{|x-y|^{n-2}}d x dy\\
&=:& {\mathcal F}_1(\rho)+{\mathcal F}_2(\rho).
\end{eqnarray*}
Proposition \ref{propHLS} says that ${\mathcal F}_2(\rho)\geq 0$.

Since the first part of the free energy is concave in $L^\frac{2n}{n+2}$ norm of the solution, it is not difficult to get \emph{a priori} estimates, which shows that in the cases of supercritical and subcritical initial data, the quantity $\|\rho\|_{L^{\frac{2n}{n+2}}}$ can be bounded from below or from above separately. More precisely, if the initial free energy ${\mathcal F}(\rho_0)<{\mathcal F}^*$, then the following estimates hold
\begin{enumerate}
\item If $\|\rho_0\|_{\frac{2n}{n+2}}< (s^*)^{\frac{n-2}{2n(m-1)}}$, then  there exists a constant $\mu_1<1$ such that
$$
\|\rho(\cdot, t)\|_{\frac{2n}{n+2}}<(\mu_1s^*)^{\frac{n-2}{2n(m-1)}}, \mbox{ for all }t>0.
$$
\item If $\|\rho_0\|_{\frac{2n}{n+2}}> (s^*)^{\frac{n-2}{2n(m-1)}}$, then there exists a constant $\mu_2>1$ such that
$$
\|\rho(\cdot, t)\|_{\frac{2n}{n+2}}> (\mu_2s^*)^{\frac{n-2}{2n(m-1)}}, \mbox{ for all } t>0.
$$
\end{enumerate}
We will give the proof of the first fact for the regularized solution in the Lemma \ref{upbounded} in section 2, and show that the second is true in Lemma \ref{lowbounded} in section 3.

This paper is arranged as follows. In section 2, we will give the proof of the global existence of weak solution. After introducing the regularized problem, a uniform estimate for the $L^{\frac{2n}{n+2}}$ norm of the regularized solution by using decomposition of the free energy is obtained. Based on this estimate, further estimates, including the spacial and time derivatives, are derived. Then the global existence follows from standard compactness arguments with the help of Aubin's lemma. In section 3, with supercritical initial data, it is shown that any solution will blow-up in finite time by studying the time derivative of second moment.

\section{Existence of weak solution}
We follow the same way on the construction of the regularized problem as in \cite{S1,SK,BCL09}, namely,
\begin{align}
\left\{
\begin{array}{llll}\label{regularized equa}\smallskip
&\partial_t \rho_\varepsilon=\Delta [(\rho_\varepsilon+\varepsilon)^m-\varepsilon^m]
-{\rm div}((\rho_\varepsilon+\varepsilon)\nabla c_\varepsilon),&& x\in \mathbb R^n ,~t\geq 0,\\
&-\Delta c_\varepsilon=J_\varepsilon * \rho_\varepsilon,
&& x\in \mathbb R^n,~t\geq 0,\\
&\rho(\bx,0)=\rho_{0\varepsilon}(\bx), && x\in \mathbb R^n
\end{array}
\right.
\end{align}
for $\varepsilon>0$, $J_\varepsilon(x)=\frac{1}{\varepsilon^n}J(\frac{x}{\varepsilon})$, $J(x)=\frac{1}{\alpha(n)}(1+|x|^2)^{-(n+2)/2}$ satisfying $\dint_{\mathbb{R}^n}J_{\varepsilon}(x) dx=1$. A simple computation derives
 $$
 c_\varepsilon=\dfrac{1}{n(n-2)\alpha(n)} \dint_{\mathbb{R}^n} \dfrac{1}{(|x-y|^2+\varepsilon^2)^{\frac{n-2}{2}}}
\rho_\varepsilon(y)dy.
 $$

The initial data $\rho_{0\varepsilon}$ is the regularization of the function $\rho_0$, it satisfies that there exists a positive constant $\delta$ such that for all $0<\varepsilon<\delta$,
 \begin{align*}
&\rho_{0\varepsilon}>0,\qquad \rho_{0\varepsilon}\in L^r(\mathbb{R}^n),~~ r\geq 1, \qquad \|\rho_{0\varepsilon}\|_{L^1}=\|\rho_0\|_{L^1}=M_0,
\end{align*}
Moreover, as $\varepsilon\rightarrow 0$,
\begin{align*}
&\int_{\mathbb{R}^n}|x|^2\rho_{0\varepsilon}dx\rightarrow\int_{\mathbb{R}^n}|x|^2\rho_{0}dx, \qquad {\mathcal F}_\varepsilon(\rho_{0\varepsilon})\rightarrow {\mathcal F}(\rho_0)\\
&\mbox{If}~~ \rho_0\in L^{p} \mbox{ for some } p>1, ~~\mbox{then}~~\|\rho_{0\varepsilon}-\rho_0\|_{L^p}\rightarrow0, \mbox{ as } ~~\varepsilon\rightarrow 0,
\end{align*}
where ${\mathcal F}_\varepsilon(\rho_{0\varepsilon})$ is the initial regularized entropy, see \eqref{freeenergy}.

The classical parabolic theory implies that the above regularized problem has a global smooth nonnegative solution $\rho_\varepsilon$ for $t>0$ if the initial data is nonnegative. Notice that the solution of the regularized problem (\ref{regularized equa}) still conserves the mass.

We will mainly focus on the estimates of the regularized solutions in this section. After getting $L^\frac{2n}{n+2}$ estimate with the help of the free energy, we obtain the uniform $L^p$ estimates by using standard method. Furthermore, the uniform estimates for space and time derivatives will be derived carefully. With all these uniform estimates, a standard compactness argument as in \cite{CLW1,BL} by using Aubin's lemma will give the global existence.

From now on, we will present the uniform estimates in five steps and will skip the compactness arguments.

{\bf Step 1.} Free energy of the regularized problem\\
The free energy on the regularized solution $\rho_\varepsilon$ is
\begin{equation}\label{freeenergy}
{\mathcal F}_{\varepsilon}(\rho_\varepsilon)=\dfrac{1}{m-1}\int_{\mathbb R^n}((\rho_\varepsilon+\varepsilon)^m-\varepsilon^m)dx - \dfrac{1}{2}\int_{\mathbb R^n} \rho_{\varepsilon}c_{\varepsilon}d x.
\end{equation}
Or, the free energy has an equivalent form in the following
\begin{eqnarray}\label{freeenergy2}
{\mathcal F}_{\varepsilon}(\rho_\varepsilon)&=&\dfrac{1}{m-1}\int_{\mathbb R^n}((\rho_\varepsilon+\varepsilon)^m-\varepsilon^m)dx \nonumber\\ &&-\dfrac{1}{2(n-2)n\alpha(n)}\int\int_{\mathbb R^n\times\mathbb R^n}\dfrac{\rho_{\varepsilon}(x,t)\rho_{\varepsilon}(y,t)}{(|x-y|^2+\varepsilon^2)^{\frac{n-2}{2}}}d x dy.
\end{eqnarray}
It is easy to check that ${\mathcal F}_{\varepsilon}(\rho_\varepsilon)$ is non-increasing in time. In fact, the system (\ref{regularized equa}) has the gradient flow structure
\begin{equation}\label{gdientflux}
 \rho_{\varepsilon t}  = {\rm div}\left((\rho_{\varepsilon}+\varepsilon) \nabla\left(\frac{m}{m-1}(\rho_{\varepsilon}+\varepsilon)^{m-1}-c_{\varepsilon}\right)\right).
\end{equation}
Now by taking $\frac{m}{m-1}\left((\rho_{\varepsilon}+\varepsilon)^{m-1}-\varepsilon^{m-1}\right)-c_{\varepsilon}$ as a test function, we have the following entropy-entropy production relation
\begin{eqnarray*}
\dfrac{d}{dt} \mathcal{F}_{\varepsilon} (\rho_{\varepsilon}(\cdot, t)) +\dint_{\mathbb{R}^n}( \rho_{\varepsilon} +\varepsilon) \Big|\nabla\left(\dfrac{m}{m-1}(\rho_{\varepsilon} +\varepsilon)^{m-1}-c_\varepsilon\right)\Big|^2 dx =0.
\end{eqnarray*}
The monotone decreasing property of the free energy follows immediately by the nonnegativity of the entropy production.


Next, we separate the free energy into two parts by using Haddy-Littlewood-Sobolev inequality \eqref{HLSinequatlity}, i.e.,
\begin{eqnarray*}
{\mathcal F}_{\varepsilon}(\rho_{\varepsilon})&=&\dfrac{1}{m-1}\int_{\mathbb R^n}\left((\rho_{\varepsilon}+{\varepsilon})^m-{\varepsilon}^m\right)dx-\dfrac{C(n)}{2(n-2)n\alpha(n)}\|\rho_{\varepsilon}\|^2_{L^{\frac{2n}{n+2}}}\\
&&+\dfrac{C(n)}{2(n-2)n\alpha(n)}\|\rho_{\varepsilon}\|^2_{L^{\frac{2n}{n+2}}}-\dfrac{1}{2(n-2)n\alpha(n)}\int\int_{\mathbb R^n\times\mathbb R^n}\dfrac{\rho_{\varepsilon}(x,t)\rho_{\varepsilon}(y,t)}{(|x-y|^2+{\varepsilon}^2)^{\frac{n-2}{2}}}d x dy\\
&\geq& \dfrac{1}{m-1}\int_{\mathbb R^n}\rho_{\varepsilon}^m dx-\dfrac{C(n)}{2(n-2)n\alpha(n)}\|\rho_{\varepsilon}\|^2_{L^{\frac{2n}{n+2}}}\\
&&+\dfrac{C(n)}{2(n-2)n\alpha(n)}\|\rho_{\varepsilon}\|^2_{L^{\frac{2n}{n+2}}}-\dfrac{1}{2(n-2)n\alpha(n)}\int\int_{\mathbb R^n\times\mathbb R^n}\dfrac{\rho_{\varepsilon}(x,t)\rho_{\varepsilon}(y,t)}{|x-y|^{n-2}}d x dy\\
&=:& {\mathcal F}_1(\rho_{\varepsilon})+{\mathcal F}_2(\rho_{\varepsilon}).
\end{eqnarray*}
Proposition \ref{propHLS} shows that the second part of the free energy is nonnegative, i.e. ${\mathcal F}_2(\rho_{\varepsilon})\geq 0$.

Due to $m>\dfrac{2n}{n+2}$, interpolation shows that
\begin{eqnarray}
\|\rho_{\varepsilon}\|_{L^{\frac{2n}{n+2}}}\leq \|\rho_{\varepsilon}\|^{1-\theta}_{L^1}\|\rho_{\varepsilon}\|^{\theta}_{L^{m}}, \quad \theta=\frac{m(n-2)}{2n(m-1)}. \label{interpolation}
\end{eqnarray}

Thus the first part of the free energy is
\begin{eqnarray}\label{freeenergy1}
{\mathcal F}_1(\rho_{\varepsilon})&=&\dfrac{1}{m-1}\int_{\mathbb R^n}\rho_{\varepsilon}^m( x,t)dx-\dfrac{C(n)}{2(n-2)n\alpha(n)}\|\rho_{\varepsilon}\|^2_{L^{\frac{2n}{n+2}}}\nonumber\\
&\geq& \dfrac{1}{m-1}\|\rho_{\varepsilon}\|^{\frac{(\theta-1)m}{\theta}}_{L^1}\|\rho_{\varepsilon}\|^{\frac{m}{\theta}}_{L^{\frac{2n}{n+2}}}
-\dfrac{C(n)}{2(n-2)n\alpha(n)}\|\rho_{\varepsilon}\|^2_{L^{\frac{2n}{n+2}}}\\
&\geq& \dfrac{1}{m-1}M_0^{\frac{2n-m(n+2)}{n-2}}\|\rho_{\varepsilon}\|^{\frac{2n(m-1)}{n-2}}_{L^{\frac{2n}{n+2}}}
-\dfrac{C(n)}{2(n-2)n\alpha(n)}\|\rho_{\varepsilon}\|^2_{L^{\frac{2n}{n+2}}}. \nonumber
\end{eqnarray}

According to the previous analysis, let
$$
f(s)=\dfrac{1}{m-1}M_0^{\frac{2n-m(n+2)}{n-2}}s-\dfrac{C(n)}{2(n-2)n\alpha(n)}s^{\frac{n-2}{n(m-1)}}.
$$
We now have a lower bound of the first part of free energy, i.e. $f\Big(\|\rho_{\varepsilon}\|^{\frac{2n(m-1)}{n-2}}_{L^{\frac{2n}{n+2}}}\Big)\leq{\mathcal F}_1(\rho_{\varepsilon})$.

{\bf Step 2.} Uniform $L^{\frac{2n}{n+2}}$ norm estimate of the regularized solution.

 The following lemma shows that for subcritical initial data, the quantity $\|\rho_{\varepsilon}\|_{L^{\frac{2n}{n+2}}}$ can be bounded.
\begin{lem} \label{upbounded}
If the initial free energy ${\mathcal F}_{\varepsilon}(\rho_{0\varepsilon})<{\mathcal F}^*:=f(s^*)$, $\|\rho_{0\varepsilon}\|_{\frac{2n}{n+2}}< (s^*)^{\frac{n-2}{2n(m-1)}}$, let $\rho_{\varepsilon}(x,t)$ be a solution of problem \eqref{regularized equa}, then  there exists a constant $\mu_1<1$ such that
$$
\|\rho_{\varepsilon}(\cdot, t)\|_{\frac{2n}{n+2}}<(\mu_1s^*)^{\frac{n-2}{2n(m-1)}}, \mbox{ for all }t>0,
$$
where $s^*$ is the maximum point of $f(s)$:
\begin{eqnarray}
s^*=\Big(\dfrac{2n^2\alpha(n)M_0^{\frac{2n-m(n+2)}{n-2}}}{C(n)}\Big)^{\frac{n(m-1)}{2n-2-mn}}.\label{defs*}
\end{eqnarray}
\end{lem}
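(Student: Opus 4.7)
The plan is to combine three ingredients: (i) a shape analysis of the single-variable function $f(s)$ defined in Step 1, (ii) the decay of the regularized free energy $\mathcal{F}_\varepsilon$ together with the nonnegativity of $\mathcal{F}_2$, and (iii) a continuity/connectedness argument in time.

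First I would analyze $f(s)=As-Bs^{p}$ with $A=\frac{1}{m-1}M_0^{\frac{2n-m(n+2)}{n-2}}$, $B=\frac{C(n)}{2(n-2)n\alpha(n)}$ and $p=\frac{n-2}{n(m-1)}$. The assumption $\frac{2n}{n+2}<m<2-\frac{2}{n}$ forces $p>1$, so $f$ is strictly concave-like in the sense that it is strictly increasing on $[0,s^{*}]$, strictly decreasing on $[s^{*},\infty)$, attains a unique maximum at $s^{*}=(A/(Bp))^{1/(p-1)}$, and satisfies $f(s)\to-\infty$ as $s\to\infty$. A direct computation (matching the exponent bookkeeping in \eqref{F*} and \eqref{defs*}) verifies that this $s^{*}$ coincides with the one in the statement and that $f(s^{*})=\mathcal{F}^{*}$.

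Next I would combine the monotonicity of the regularized free energy with the decomposition $\mathcal{F}_\varepsilon=\mathcal{F}_1+\mathcal{F}_2$ and the bound $\mathcal{F}_2(\rho_\varepsilon)\geq 0$ coming from Proposition \ref{propHLS}. Inserting the interpolation bound \eqref{interpolation} into \eqref{freeenergy1} gives, with $q:=\frac{2n(m-1)}{n-2}$,
\begin{equation*}
f\bigl(\|\rho_\varepsilon(\cdot,t)\|_{L^{2n/(n+2)}}^{q}\bigr)\;\leq\;\mathcal{F}_1(\rho_\varepsilon(\cdot,t))\;\leq\;\mathcal{F}_\varepsilon(\rho_\varepsilon(\cdot,t))\;\leq\;\mathcal{F}_\varepsilon(\rho_{0\varepsilon})\;<\;f(s^{*}).
\end{equation*}
Set $\delta:=f(s^{*})-\mathcal{F}_\varepsilon(\rho_{0\varepsilon})>0$. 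Because $f$ is strictly increasing on $[0,s^{*}]$ and strictly decreasing on $[s^{*},\infty)$, there are exactly two values $\mu_1s^{*}<s^{*}<\mu_2s^{*}$ with $f(\mu_1 s^{*})=f(\mu_2 s^{*})=f(s^{*})-\delta$, and $\mu_1\in(0,1)$, $\mu_2\in(1,\infty)$.

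Finally I would close the argument by continuity. The hypothesis $\|\rho_{0\varepsilon}\|_{L^{2n/(n+2)}}^{q}<s^{*}$ together with $f(\|\rho_{0\varepsilon}\|_{L^{2n/(n+2)}}^{q})\leq \mathcal{F}_\varepsilon(\rho_{0\varepsilon})=f(\mu_1 s^{*})$ and the strict monotonicity of $f$ on $[0,s^{*}]$ forces $\|\rho_{0\varepsilon}\|_{L^{2n/(n+2)}}^{q}\leq\mu_1 s^{*}$. Since the regularized solution is smooth, $t\mapsto\|\rho_\varepsilon(\cdot,t)\|_{L^{2n/(n+2)}}^{q}$ is continuous, and the forbidden region $(\mu_1 s^{*},\mu_2 s^{*})$ (where $f$ strictly exceeds $f(\mu_1 s^{*})\geq \mathcal{F}_\varepsilon(\rho_\varepsilon(\cdot,t))$) cannot be entered without crossing $\mu_1 s^{*}$ from below; that crossing is impossible by the inequality above. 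Hence $\|\rho_\varepsilon(\cdot,t)\|_{L^{2n/(n+2)}}^{q}\leq \mu_1 s^{*}$ for all $t>0$, which is the claim (after possibly relaxing $\leq$ to $<$ by enlarging $\mu_1$ slightly below $1$).

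The only delicate point, which I consider the main (mild) obstacle, is the last topological step: one must ensure that the free-energy level $\mathcal{F}_\varepsilon(\rho_{0\varepsilon})$ is strictly below $f(s^{*})$ so that the two roots $\mu_1 s^{*}$ and $\mu_2 s^{*}$ are well separated, and that continuity in time of the $L^{2n/(n+2)}$ norm of the smooth regularized solution rules out any jump from the starting component $[0,\mu_1 s^{*}]$ into $[\mu_2 s^{*},\infty)$. Everything else is a bookkeeping verification of the algebraic identity $f(s^{*})=\mathcal{F}^{*}$ and standard monotonicity of $f$.
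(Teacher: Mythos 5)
Your proposal is correct and follows essentially the same route as the paper: decompose $\mathcal{F}_\varepsilon$ via the Hardy--Littlewood--Sobolev inequality, bound $\mathcal{F}_1$ from below by $f$ of the $L^{2n/(n+2)}$ norm using interpolation, and exploit the strict concavity of $f$ together with the decay of the free energy. The only difference is that you spell out the continuity-in-time/forbidden-interval step that the paper leaves implicit, which is a welcome clarification rather than a deviation.
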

\begin{proof}
Notice that $1<m<2-\frac{2}{n}$ implies $\frac{n-2}{n(m-1)}>1$, we know that $f(s)$ is a strictly concave function in $0<s<\infty$. Directly calculation shows that
$$
f'(s)=\dfrac{1}{m-1}M_0^{\frac{2n-m(n+2)}{n-2}}-\dfrac{C(n)}{2(n-2)n\alpha(n)}\frac{n-2}{n(m-1)}s^{\frac{2n-2-mn}{n(m-1)}}.
$$
As a consequence, $s^*$ is a unique maximum point of $f(s)$.
Therefore the important property of $f$ is that $f(s)$ is monotone increasing for $0<s<s^*$, while $f(s)$ is monotone decreasing for $s>s^*$.

In the case that initial free energy ${\mathcal F}_{\varepsilon}(\rho_{0\varepsilon})<f(s^*)$, we can make it even smaller, i.e. there is a $\delta<1$ such that ${\mathcal F}_{\varepsilon}(\rho_{0\varepsilon})<\delta f(s^*)$.

Combining all the facts we know, including the interpolation, Haddy-Littlewood-Sobolev inequality and the monotonicity of free energy, we have
\begin{eqnarray}\label{leq}
f\Big(\|\rho_{\varepsilon}\|^{\frac{2n(m-1)}{n-2}}_{L^{\frac{2n}{n+2}}}\Big)\leq{\mathcal F}_1(\rho_{\varepsilon})\leq {\mathcal F}_{\varepsilon}(\rho_{\varepsilon})\leq {\mathcal F}_{\varepsilon}(\rho_{0\varepsilon})<\delta f(s^*).
\end{eqnarray}
If initially $\|\rho_{0\varepsilon}\|^{\frac{2n(m-1)}{n-2}}_{L^{\frac{2n}{n+2}}}<s^*$, due to the fact that $f(s)$ is increasing in $0<s<s^*$, there exists a $\mu_1<1$ such that  $\|\rho_{\varepsilon}\|^{\frac{2n(m-1)}{n-2}}_{L^{\frac{2n}{n+2}}}<\mu_1 s^*$.
\end{proof}

{\bf Step 3.} Uniform $L^p$ $(1<p<n)$ estimates of the regularized solution.
$\newline$
 Under the assumption of $\rho_{0\varepsilon}\in L^{p}(\mathbb R ^n)$ with $1<p<n$, we will give the estimate on $\|\rho_{\varepsilon}\|_{L^p}$, and as a byproduct, also the uniform estimates on space derivatives $\nabla \rho_{\varepsilon}^{\frac{m+p-1}{2}}$ and $\nabla c_{\varepsilon}$.

\begin{lem} \label{lemLp}
Assume $\rho_{0\varepsilon}\in L^1(\mathbb R ^n)\cap L^{p}(\mathbb R ^n)$, $\|\rho_{0\varepsilon}\|_{L^{\frac{2n}{n+2}}} < (s^*)^{\frac{n-2}{2n(m-1)}}$ and ${\mathcal F}_{\varepsilon}(\rho_{0\varepsilon})<{\mathcal F}^*:=f(s^*)$, $\rho_{\varepsilon}$ is a smooth solution of the regularized problem \eqref{regularized equa}, then
\begin{eqnarray}
&&\|\rho_{\varepsilon}\|_{L^\infty(0,T; L^p(\mathbb{R}^n)\cap L^{p+1}(0,T; L^{p+1}(\mathbb{R}^n)))}\leq C, \quad
\|\nabla \rho_{\varepsilon}^{\frac{m+p-1}{2}}\|_{L^2(0,T; L^2(\mathbb{R}^n))}\leq C, \label{est2324}
\end{eqnarray}
moreover, for $1<p<n$, it holds
\begin{eqnarray}
&& \|\nabla c_{\varepsilon}\|_{L^\infty(0,T; L^s(\mathbb{R}^n))}\leq C, \quad s\in \big(\dfrac{n}{n-1},\dfrac{np}{n-p}\big]. \label{est25}
\end{eqnarray}
\end{lem}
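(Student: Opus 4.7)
The plan is to derive an energy identity for $\|\rho_\varepsilon(\cdot,t)\|_{L^p}^p$, close it by absorbing the aggregation term via a Gagliardo--Nirenberg interpolation, and conclude via Gr\"onwall. Multiplying the first equation of \eqref{regularized equa} by $p\rho_\varepsilon^{p-1}$, integrating over $\mathbb{R}^n$, using $-\Delta c_\varepsilon = J_\varepsilon*\rho_\varepsilon$, and exploiting the pointwise inequality $\rho_\varepsilon^{p-2}(\rho_\varepsilon+\varepsilon)^{m-1}\ge \rho_\varepsilon^{m+p-3}$ (valid because $m\ge 1$ and $\rho_\varepsilon\ge 0$), I would arrive at
\begin{equation*}
\frac{d}{dt}\|\rho_\varepsilon\|_{L^p}^p + \frac{4mp(p-1)}{(m+p-1)^2}\bigl\|\nabla \rho_\varepsilon^{(m+p-1)/2}\bigr\|_{L^2}^2 \le (p-1)\!\int_{\mathbb R^n}\!\rho_\varepsilon^p (J_\varepsilon*\rho_\varepsilon)\,dx + p\varepsilon\!\int_{\mathbb R^n}\!\rho_\varepsilon^{p-1}(J_\varepsilon*\rho_\varepsilon)\,dx.
\end{equation*}
H\"older's inequality and Young's inequality for convolutions (using $\|J_\varepsilon\|_{L^1}=1$) bound the right-hand side by $(p-1)\|\rho_\varepsilon\|_{L^{p+1}}^{p+1} + Cp\varepsilon\|\rho_\varepsilon\|_{L^p}^p$.

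The crucial step is to bound the super-linear term $\|\rho_\varepsilon\|_{L^{p+1}}^{p+1}$ by the dissipation. Setting $u = \rho_\varepsilon^{(m+p-1)/2}$, one has $\|\rho_\varepsilon\|_{L^{p+1}}^{p+1} = \|u\|_{L^q}^q$ with $q = 2(p+1)/(m+p-1)$. Gagliardo--Nirenberg interpolation between $\|\nabla u\|_{L^2}$ and a lower auxiliary norm (for instance $\|u\|_{L^{2/(m+p-1)}}$, which equals $\|\rho_\varepsilon\|_{L^1}^{(m+p-1)/2}=M_0^{(m+p-1)/2}$, with the $L^{2n/(n+2)}$ bound from Lemma~\ref{upbounded} used as a back-up when the $L^1$ interpolation exponent falls below one) produces $\|u\|_{L^q}^q \le C\|\nabla u\|_{L^2}^{\alpha q}\cdot A^{(1-\alpha)q}$. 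A direct scaling computation gives $\alpha q = 2np/(nm+np-2)$, so the strict inequality $\alpha q<2$ reduces to $m>2/n$, which is automatic under our hypothesis $m > 2n/(n+2)$. Young's inequality then delivers the absorption
\begin{equation*}
(p-1)\|\rho_\varepsilon\|_{L^{p+1}}^{p+1} \le \frac{2mp(p-1)}{(m+p-1)^2}\bigl\|\nabla \rho_\varepsilon^{(m+p-1)/2}\bigr\|_{L^2}^2 + C(M_0,p),
\end{equation*}
and the remaining half of the dissipation stays on the left-hand side.

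The resulting differential inequality $\tfrac{d}{dt}\|\rho_\varepsilon\|_{L^p}^p \le C + Cp\varepsilon\|\rho_\varepsilon\|_{L^p}^p$, combined with the fact that $\varepsilon$ is fixed, yields via Gr\"onwall the uniform bound $\|\rho_\varepsilon\|_{L^\infty(0,T;L^p)}\le C_T$. Time-integration of the surviving dissipation then produces $\|\nabla\rho_\varepsilon^{(m+p-1)/2}\|_{L^2(0,T;L^2)}\le C_T$, and inserting this back into the Gagliardo--Nirenberg bound of Step~2 supplies $\|\rho_\varepsilon\|_{L^{p+1}(0,T;L^{p+1})}\le C_T$, establishing \eqref{est2324}. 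For \eqref{est25}, I would use the representation $\nabla c_\varepsilon = (\nabla\Gamma)*(J_\varepsilon*\rho_\varepsilon)$ with $\nabla\Gamma\in L^{n/(n-1),\infty}(\mathbb R^n)$; the weak Young inequality gives $\|\nabla c_\varepsilon\|_{L^{np/(n-p)}}\le C\|\rho_\varepsilon\|_{L^p}$ for $1<p<n$, and interpolation with the mass bound $\|\rho_\varepsilon\|_{L^1}=M_0$ extends this to every $s\in(n/(n-1), np/(n-p)]$.

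The main obstacle is Step~2: selecting the Gagliardo--Nirenberg interpolation so that the restriction $\tfrac{2n}{n+2}<m<2-\tfrac{2}{n}$ yields a genuinely subcritical exponent $\alpha q<2$, with the auxiliary norm controlled by Lemma~\ref{upbounded} or by conservation of mass rather than by a self-referential quantity involving $\|\rho_\varepsilon\|_{L^p}$ itself; for large values of $p$ this typically requires iterating the argument in stages to keep the auxiliary interpolation exponent admissible.
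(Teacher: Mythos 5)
Your overall architecture (energy identity for $\|\rho_\varepsilon\|_{L^p}^p$, absorption of $\|\rho_\varepsilon\|_{L^{p+1}}^{p+1}$ into the dissipation via Gagliardo--Nirenberg and Young, Gr\"onwall, and weak Young for $\nabla c_\varepsilon$) is the paper's argument. But your ``crucial step'' contains a genuine error. With $u=\rho_\varepsilon^{(m+p-1)/2}$, $q=\frac{2(p+1)}{m+p-1}$ and the auxiliary norm $\|u\|_{L^{2/(m+p-1)}}=\|\rho_\varepsilon\|_{L^1}^{(m+p-1)/2}$, the scaling relation $\frac1q=\alpha\frac{n-2}{2n}+(1-\alpha)\frac{m+p-1}{2}$ gives
\begin{equation*}
\alpha q=\frac{2np}{n(m+p-1)-(n-2)}=\frac{2np}{nm+np-2n+2},
\end{equation*}
not $\frac{2np}{nm+np-2}$ as you wrote. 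Consequently $\alpha q<2$ is equivalent to $nm>2n-2$, i.e.\ $m>2-\frac{2}{n}$ --- not $m>\frac{2}{n}$. This is exactly the classical supercritical-diffusion condition, and it \emph{fails} throughout the paper's range $\frac{2n}{n+2}<m<2-\frac{2}{n}$. So the $L^1$-based interpolation you propose as the main route cannot absorb $\|\rho_\varepsilon\|_{L^{p+1}}^{p+1}$ into half of the dissipation for any $m$ under consideration; the obstruction is not that ``the $L^1$ interpolation exponent falls below one'' (one checks $\alpha<1$ still holds), but that the power of $\|\nabla u\|_{L^2}$ is $\geq 2$, so Young's inequality gives nothing.

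What you relegate to a ``back-up'' is in fact the only viable route, and it is precisely what the paper does: choose $r$ with $\frac{m+p-1}{2}r=\frac{2n}{n+2}$, so the auxiliary norm is $\|\rho_\varepsilon\|_{L^{2n/(n+2)}}$, which is bounded uniformly in time by Lemma~\ref{upbounded} (this is where the free-energy hypothesis enters). With that choice the exponent becomes $\nu=\frac{2(n+2)(p+1)-4n}{(n+2)(m+p-1)-2(n-2)}$, and $\nu<2$ is equivalent to $m>\frac{2n}{n+2}$ --- exactly the paper's standing assumption, and the reason this lemma works for all $p>1$ in one pass, with no staged iteration needed. If you replace your Step~2 by this choice of auxiliary norm (and correct the exponent computation), the rest of your argument, including the Gr\"onwall step and the weak Young treatment of $\nabla c_\varepsilon$, goes through as in the paper.
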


\begin{proof}
Multiplying the first equation of (\ref{regularized equa}) by $p\rho_{\varepsilon}^{p-1}$ with $p>1$, we have
\begin{eqnarray*}
\dfrac{d}{dt}\dint_{\mathbb{R}^n} \rho_{\varepsilon}^p dx &=& -pm(p-1)\dint_{\mathbb{R}^n} (\rho_{\varepsilon}+\varepsilon)^{m-1}\rho_{\varepsilon}^{p-2} |\nabla\rho_{\varepsilon}|^2 dx +(p-1) \dint_{\mathbb{R}^n} \nabla \rho_{\varepsilon}^p\cdot \nabla c_{\varepsilon} dx\\
&&+\varepsilon p\int_{\mathbb{R}^n}\nabla \rho_{\varepsilon}^{p-1}\cdot \nabla c_{\varepsilon} dx \\
&\leq& -pm(p-1)\dint_{\mathbb{R}^n} \rho_{\varepsilon}^{p+m-3} |\nabla\rho_{\varepsilon}|^2 dx +(p-1)\dint_{\mathbb{R}^n}\rho_{\varepsilon}^{p+1}dx+p\varepsilon\dint_{\mathbb{R}^n}\rho_{\varepsilon}^{p}dx\\
&=& -\dfrac{4pm(p-1)}{(m+p-1)^2}\dint_{\mathbb{R}^n}|\nabla\rho_{\varepsilon}^{\frac{m+p-1}{2}}|^2 dx +(p-1)\dint_{\mathbb{R}^n}\rho_{\varepsilon}^{p+1}+p\varepsilon\dint_{\mathbb{R}^n}\rho_{\varepsilon}^{p}dx.
\end{eqnarray*}
Now we will focus on the estimate on $\dint_{\mathbb{R}^n}\rho_{\varepsilon}^{p+1}$.
\begin{eqnarray*}
\dint_{\mathbb{R}^n}\rho_{\varepsilon}^{p+1} &=&\Big\|\rho_{\varepsilon}^{\frac{m+p-1}{2}}\Big\|_{L^{\frac{2(p+1)}{m+p-1}}}^{\frac{2(p+1)}{m+p-1}}\\
&\leq & G^{\frac{2(p+1)}{m+p-1}} \Big\|\nabla \rho_{\varepsilon}^{\frac{m+p-1}{2}}\Big\|_{L^2}^{\alpha\frac{2(p+1)}{m+p-1}} \cdot \Big\|\rho_{\varepsilon}^{\frac{m+p-1}{2}}\Big\|_{L^r}^{(1-\alpha)\frac{2(p+1)}{m+p-1}},
\end{eqnarray*}
where $G$ is the constant from Gagliardo-Nirenberg-Sobolev ineqality,
$$
\dfrac{m+p-1}{2}r=\dfrac{2n}{n+2},\quad\quad \dfrac{m+p-1}{2(p+1)}=\dfrac{\alpha (n-2)}{2n} +\dfrac{1-\alpha}{r},
$$
and
$$
\alpha=\dfrac{\frac{m+p-1}{2}(\frac{n+2}{2n}-\frac{1}{p+1})}{\frac{(n+2)(m+p-1)-2(n-2)}{4n}}.
$$
In the next, we will use notation
$$
\nu:=\alpha\dfrac{2(p+1)}{m+p-1}=\frac{2(n+2)(p+1)-4n}{(n+2)(m+p-1)-2(n-2)} <2 \quad \mbox{ in the case of }\quad m>\dfrac{2n}{n+2}.
$$
Thus by Young's inequality, we get
\begin{eqnarray}\label{lp+1}
\dint_{\mathbb{R}^n} \rho_{\varepsilon}^{p+1}&\leq &  G^{\frac{2(p+1)}{m+p-1}} \Big\|\nabla \rho_{\varepsilon}^{\frac{m+p-1}{2}}\Big\|_{L^2}^{\nu} \|\rho_{\varepsilon}\|_{L^{\frac{2n}{n+2}}}^{(1-\alpha)(p+1)}\nonumber\\
&\leq &  G^{\frac{2(p+1)}{m+p-1}} \Big(\epsilon \Big\|\nabla \rho_{\varepsilon}^{\frac{m+p-1}{2}}\Big\|_{L^2}^2 +C(\epsilon)\|\rho_{\varepsilon}\|_{L^{\frac{2n}{n+2}}}^{\frac{2(1-\alpha)(p+1)}{2-\nu}} \Big).
\end{eqnarray}
Now we can choose $\epsilon$ such that
$$
(p-1)G^{\frac{2(p+1)}{m+p-1}}\epsilon=\dfrac{2pm(p-1)}{(m+p-1)^2}.
$$
By using the boundedness of $\|\rho_{\varepsilon}\|_{L^{\frac{2n}{n+2}}}$ from Lemma \ref{upbounded}, we have
\begin{eqnarray}\label{qq}
\dfrac{d}{dt}\dint_{\mathbb{R}^n}\rho_{\varepsilon}^p dx +\dfrac{2pm(p-1)}{(m+p-1)^2} \dint_{\mathbb{R}^n}|\nabla \rho_{\varepsilon}^{\frac{m+p-1}{2}}|^2 dx \leq p\varepsilon\dint_{\mathbb{R}^n}\rho_{\varepsilon}^{p}dx+C(M_0,p,n).
\end{eqnarray}
Gronwall's inequality implies that $\rho_{\varepsilon}\in L^{\infty}(0,T;L^p(\mathbb{R}^n))$.
Therefore we have the uniform estimate by integrating (\ref{qq}) in $t$, for any fixed $T>0$,
\begin{eqnarray*}
\sup_{0\leq t\leq T}\int_{\mathbb{R}^n}\rho_{\varepsilon}^p(x,t) dx +\frac{2pm(p-1)}{(m+p-1)^2}\int^T_0\dint_{\mathbb{R}^n}|\nabla \rho_{\varepsilon}^{\frac{m+p-1}{2}}|^2 dx dt \leq C(M_0,p,n,T).
\end{eqnarray*}
Moreover combining this estimate with (\ref{lp+1}), it is easy to see that $\rho_{\varepsilon} $ $\in $ $L^{p+1}(0,T; L^{p+1}$ $(\mathbb R^n))$. The estimate for $\nabla c_{\varepsilon}$ in \eqref{est25} can be directly obtained from weak Young's inequality.
\end{proof}
\begin{rem}
The above lemma gives a general $L^p$ estimate. In particular, we can take $p=m$ and get the estimate $\rho_{\varepsilon}\in L^{\infty}(0,T;L^{m}({\mathbb{R}^n}))\cap L^{m+1}(0,T;L^{m+1}({\mathbb{R}^n}))$ which will be used later.
\end{rem}
\begin{rem}
The fact that $m>\frac{2n}{n+2}$ is very important in the above proof. It makes the use of Young's inequality successful (see (\ref{lp+1})), which is impossible in the case $m=\frac{2n}{n+2}$, $\nu=2$.
\end{rem}

{\bf Step 4.} Uniform estimates for the space derivatives

The estimate on space derivative of $\rho_{\varepsilon}$ is important in order to use Aubin's lemma for compactness arguments. We will use the $L^p$ estimate when $p=m$.

\begin{lem}\label{nablarho} Assume $p=m$ and the assumptions of Lemma \ref{upbounded} hold, then
\begin{eqnarray}
&&\|\nabla\rho_{\varepsilon}\|_{L^{2}(0,T; L^{\frac{2m}{3-m}}(\mathbb{R}^n))}\leq C, \quad \mbox{ in the case of } m< \frac{3}{2}, \label{est28}\\
&&\|\nabla\rho_{\varepsilon}\|_{L^{2}(0,T; L^{2}(\mathbb{R}^n))}\leq C, \quad \mbox{ in the case of } m\geq \frac{3}{2}.\label{est29}
\end{eqnarray}
\end{lem}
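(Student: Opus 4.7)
The plan is to convert the $L^2(0,T;L^2)$ bound on $\nabla \rho_\varepsilon^{(m+p-1)/2}$ supplied by Lemma \ref{lemLp} into a gradient bound on $\rho_\varepsilon$ itself. The pointwise chain-rule identity
\[
\nabla \rho_\varepsilon \;=\; \frac{2}{2m-1}\,\rho_\varepsilon^{\,3/2-m}\,\nabla \rho_\varepsilon^{\,m-1/2}
\]
is the bridge, and the sign of the exponent $3/2-m$ is precisely what forces the split into the two regimes stated in the lemma.

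For $\frac{2n}{n+2}<m<\frac{3}{2}$, I would apply Lemma \ref{lemLp} with $p=m$ to obtain both $\rho_\varepsilon\in L^\infty(0,T;L^m(\mathbb R^n))$ and $\nabla\rho_\varepsilon^{\,m-1/2}\in L^2(0,T;L^2(\mathbb R^n))$. Since $3/2-m>0$, the weight $\rho_\varepsilon^{3/2-m}$ is harmless on the free boundary. Setting $q=\tfrac{2m}{3-m}$ (which satisfies $q<2$ precisely when $m<3/2$), H\"older's inequality with conjugate exponents $2/q$ and $2/(2-q)$ applied to the identity above yields, after the short algebraic check that the chosen $q$ makes the H\"older weight on $\rho_\varepsilon$ collapse to the $L^m$ norm,
\[
\int_0^T\|\nabla\rho_\varepsilon\|_{L^{q}(\mathbb R^n)}^{2}\,dt \;\le\; C\,\|\rho_\varepsilon\|_{L^\infty_tL^m_x}^{m(2-q)/q} \int_0^T\|\nabla\rho_\varepsilon^{\,m-1/2}\|_{L^{2}}^{2}\,dt,
\]
and both factors on the right are controlled by Lemma \ref{lemLp}, giving \eqref{est28}.

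For $m\ge \frac{3}{2}$, the weight $\rho_\varepsilon^{3/2-m}$ is singular on $\{\rho_\varepsilon=0\}$ and the H\"older scheme carried out with $p=m$ collapses. The workaround is to re-apply Lemma \ref{lemLp} with a \emph{different} exponent, namely $p=3-m$, which lies in $(1,\tfrac{3}{2}]$ for $m\in[\tfrac{3}{2},2-\tfrac{2}{n})$. The hypothesis $\rho_{0\varepsilon}\in L^{3-m}(\mathbb R^n)$ is automatic since the regularised initial datum lies in every $L^{r}(\mathbb R^n)$ with $r\ge 1$, and the other hypotheses coincide with those inherited from Lemma \ref{upbounded}. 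For this choice $\frac{m+p-1}{2}=1$, so the lemma returns directly $\nabla\rho_\varepsilon\in L^2(0,T;L^2(\mathbb R^n))$, which is \eqref{est29}. The main obstacle I foresee is the algebraic bookkeeping in the first case: one must verify that $q=\tfrac{2m}{3-m}$ is the unique exponent for which the H\"older weight on $\rho_\varepsilon$ matches the available $L^m_x$ bound, and confirm that all constants remain uniform in $\varepsilon$. Nothing is deep here, but the calibration of the exponents has to be done carefully, and it is this careful matching that explains why exactly two regimes appear in the statement.
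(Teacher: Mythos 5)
Your proof is correct. The first case ($m<\tfrac{3}{2}$) is exactly the paper's argument: the identity $\nabla\rho_{\varepsilon}=\tfrac{2}{2m-1}\rho_{\varepsilon}^{3/2-m}\nabla\rho_{\varepsilon}^{m-1/2}$ followed by H\"older against the $L^{\infty}_tL^m_x$ and $L^2_tL^2_x$ bounds of Lemma \ref{lemLp} with $p=m$, and your exponent bookkeeping ($q=\tfrac{2m}{3-m}$, conjugate pair $\tfrac{2}{q}$, $\tfrac{2}{2-q}$) checks out.

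In the second case you take a genuinely different, and more economical, route. The paper tests the equation with $\rho_{\varepsilon}^{2-m}$ (i.e.\ runs the $p=3-m$ energy identity by hand) and then controls the resulting term $\int\rho_{\varepsilon}^{4-m}\,dx$ by Gagliardo--Nirenberg against the \emph{already known} quantity $\|\nabla\rho_{\varepsilon}^{m-1/2}\|_{L^2_tL^2_x}$ from the $p=m$ step; this forces the extra exponent verification $\theta\tfrac{4-m}{m-1/2}\le 2$, equivalent to $m\ge\tfrac{5n}{3n+2}$, which the paper checks separately for $n\ge 6$ and $n<6$. You instead simply re-invoke Lemma \ref{lemLp} at $p=3-m\in(1,\tfrac32]$, for which $\tfrac{m+p-1}{2}=1$ and the gradient estimate of that lemma \emph{is} \eqref{est29}. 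This works because the only structural condition in the proof of Lemma \ref{lemLp}, namely $\nu<2$, is equivalent to $m>\tfrac{2n}{n+2}$ independently of $p$, and because $1<3-m<n$; the absorption of $\int\rho_{\varepsilon}^{p+1}$ is already handled inside that lemma by Young's inequality, so no new exponent check is needed. The one point you should state explicitly rather than wave at: membership $\rho_{0\varepsilon}\in L^{3-m}$ is not enough, you need a bound on $\|\rho_{0\varepsilon}\|_{L^{3-m}}$ \emph{uniform in} $\varepsilon$ to run Gronwall, and this follows by interpolating between the uniformly controlled $L^1$ and $L^m$ norms precisely because $3-m\le m$ when $m\ge\tfrac32$ --- the same fact the paper invokes at the end of its own computation. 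With that sentence added, your argument is complete and arguably cleaner than the printed one.
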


\begin{proof}
In the case of $m<\frac{3}{2}$, using (\ref{est2324}), it holds for $p=m$ that
 \begin{eqnarray}
&&\|\rho_{\varepsilon}\|_{L^\infty(0,T; L^m(\mathbb{R}^n))}\leq C, \quad \|\nabla \rho_{\varepsilon}^{m-\frac{1}{2}}\|_{L^2(0,T; L^2(\mathbb{R}^n))}\leq C. \label{estp=m}
\end{eqnarray}
We can use the expression
\begin{eqnarray*}
\nabla\rho_{\varepsilon}=\dfrac{2}{2m-1}\rho_{\varepsilon}^{\frac{3}{2}-m} \nabla\rho_{\varepsilon}^{m-\frac{1}{2}},
\end{eqnarray*}
then H\"older inequality and \eqref{estp=m} imply \eqref{est28}.

In the case of $m\geq\frac{3}{2}$, taking $\rho_{\varepsilon}^{2-m}$ as test function in (\ref{equs}), we have
\begin{eqnarray*}
&&\frac{1}{3-m}\frac{d}{dt}\int_{ {\mathbb R}^n} \rho_{\varepsilon}^{3-m} d x +m(2-m)
\int_{ {\mathbb R}^n} \left|\nabla \rho_{\varepsilon}\right|^2 d x
\leq\frac{2-m}{3-m}\int_{ {\mathbb R}^n} \rho_{\varepsilon}^{4-m}d x+\varepsilon\int_{\mathbb{R}^n}\rho_{\varepsilon}^{3-m}dx.
\end{eqnarray*}
Next we only need to estimate $\int_{ {\mathbb R}^n} \rho_{\varepsilon}^{4-m}d x$ by $\|\rho_{\varepsilon}\|_{L^m}$ and $\|\nabla \rho_{\varepsilon}^{m-\frac{1}{2}}\|_{L^2(0,T;L^2(\mathbb{R}^n))}$.
By Gagliardo-Nirenberg-Sobolev inequality, we have
 \begin{eqnarray}
 \int_{ {\mathbb R}^n} \rho_{\varepsilon}^{4-m}d x
 =\|\rho_{\varepsilon}^{m-1/2}\|_{L^{\frac{4-m}{m-1/2}}}^{\frac{4-m}{m-1/2}}&\leq& C\|\nabla \rho_{\varepsilon}^{m-1/2}\|_{L^2}^{\theta\frac{4-m}{m-1/2}}\|\rho_{\varepsilon}^{m-1/2}\|_{L^{\frac{m}{m-1/2}}}^{(1-\theta)\frac{4-m}{m-1/2}}\nonumber\\
 &=&C\|\nabla\rho_{\varepsilon}^{m-1/2}\|_{L^2}^{\theta\frac{4-m}{m-1/2}}\|\rho_{\varepsilon}\|_{L^m}^{(1-\theta)(4-m)},\label{estL4-m}
\end{eqnarray}
where $0<\theta=\frac{2(2-m)(m-1/2)}{m(4-m)\left(\frac{m-1/2}{m}-\frac{n-2}{2n}\right)}<1$.
Thus it remains to show if $m\geq \frac{3}{2}$ and $\frac{2n}{n+2}<m<2-\frac{2}{n}$, it holds that
\begin{eqnarray}
\theta\frac{4-m}{m-1/2}=\frac{2(2-m)}{m-\frac{1}{2}-\frac{m(n-2)}{2n}}\leq 2.\label{exponent}
\end{eqnarray}
Actually, (\ref{exponent}) is equivalent to $m\geq \frac{5n}{3n+2}$, which can be obtained from the following two facts.
\begin{itemize}
\item When $n\geq 6$, since $\frac{2n}{n+2}\geq \frac{5n}{3n+2}$, we have $m>\frac{5n}{3n+2}$;
\item When $n<6$, since $\frac{3}{2}> \frac{5n}{3n+2}$, we have $m>\frac{5n}{3n+2}$.
\end{itemize}

Now by integrating (\ref{estL4-m}) in time, we have
\begin{eqnarray*}
 \int_0^T\int_{ {\mathbb R}^n} \rho_{\varepsilon}^{4-m}d x dt \leq
 C \left(\|\rho_{\varepsilon}\|_{L^\infty( 0,T; L^{m}(\mathbb R^n))},\|\nabla \rho_{\varepsilon}^{m-1/2}\|_{L^2 (0,T; L^2(\mathbb R^n))}, T\right).
\end{eqnarray*}

Therefore,
\begin{eqnarray*}
&&\frac{1}{(3-m)}\int_{ {\mathbb R}^n} \rho_{\varepsilon}^{3-m} d x +m(2-m)
\int^T_0\int_{ {\mathbb R}^n} \left|\nabla \rho_{\varepsilon} \right|^2 d x dt\\
&\leq &\frac{1}{(3-m)}\|\rho_{0\varepsilon}\|^{3-m}_{L^{3-m}}+C\leq C\left(\|\rho_{0\varepsilon}\|_{L^m},\|\rho_{0\varepsilon}\|_{L^1}\right)+C,
\end{eqnarray*}
where we have used the fact that $3-m\leq m$. So, (\ref{est29}) holds.
\end{proof}

{\bf Step 5.} Uniform estimate for the time derivative.

This subsection will give another important fact in order to use Aubins lemma, i.e. the estimate of the time derivative of $\rho_{\varepsilon}$.

\begin{lem}\label{time derivative}
Assume $p=m$ and the assumptions of Lemma \ref{upbounded} hold, then
\begin{eqnarray*}
\|\partial_t\rho_{\varepsilon}\|_{L^2(0,T; W_{\rm loc}^{-1,s}(\mathbb{R}^n))}\leq C, \quad s=\min\{\dfrac{2m}{m+1},\dfrac{nm(m+1)}{nm+(n-m)(m+1)}\}>1.
\end{eqnarray*}
\end{lem}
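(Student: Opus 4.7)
The plan is to write the PDE in divergence form as $\partial_t \rho_\varepsilon = \operatorname{div}(J_1^\varepsilon - J_2^\varepsilon)$, with the diffusion flux $J_1^\varepsilon := m(\rho_\varepsilon+\varepsilon)^{m-1}\nabla\rho_\varepsilon$ and the aggregation flux $J_2^\varepsilon := (\rho_\varepsilon+\varepsilon)\nabla c_\varepsilon$. Then for any compact $K\subset\mathbb{R}^n$ and $\varphi\in C_c^\infty(K)$, duality gives
$$|\langle \partial_t\rho_\varepsilon,\varphi\rangle| \leq \bigl(\|J_1^\varepsilon\|_{L^s(K)} + \|J_2^\varepsilon\|_{L^s(K)}\bigr)\|\nabla\varphi\|_{L^{s'}(K)},$$
reducing the task to controlling each flux uniformly in $L^2(0,T;L^s_{\rm loc})$ for the claimed exponent.

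For $J_1^\varepsilon$ I would use the pointwise bound $(\rho_\varepsilon+\varepsilon)^{m-1}\leq 2^{m-1}(\rho_\varepsilon^{m-1}+\varepsilon^{m-1})$ (valid since $m>1$) to isolate the main piece $\rho_\varepsilon^{m-1}\nabla\rho_\varepsilon = \tfrac{2}{2m-1}\rho_\varepsilon^{1/2}\nabla\rho_\varepsilon^{m-1/2}$. Applying H\"older with exponents $2m$ and $2$ gives
$$\|\rho_\varepsilon^{m-1}\nabla\rho_\varepsilon\|_{L^{s_1}} \leq C\|\rho_\varepsilon\|_{L^m}^{1/2}\|\nabla\rho_\varepsilon^{m-1/2}\|_{L^2},\qquad s_1=\tfrac{2m}{m+1}.$$
Squaring in time and invoking Lemma \ref{lemLp} with $p=m$, which supplies $\rho_\varepsilon\in L^\infty(0,T;L^m)$ and $\nabla\rho_\varepsilon^{m-1/2}\in L^2(0,T;L^2)$ uniformly in $\varepsilon$, delivers the $L^2(0,T;L^{s_1})$ bound. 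The residual $\varepsilon^{m-1}\nabla\rho_\varepsilon$ restricted to $K$ is handled by $\varepsilon^{m-1}$ times the $L^2(0,T;L^{s_1}(K))$ bound on $\nabla\rho_\varepsilon$ from Lemma \ref{nablarho}, so it contributes harmlessly.

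For $J_2^\varepsilon$ I would apply H\"older in space with $\rho_\varepsilon+\varepsilon\in L^{m+1}$ and $\nabla c_\varepsilon\in L^{nm/(n-m)}$ (the latter from Lemma \ref{lemLp} with $p=m$, applicable since $m<n$ under the assumed range), yielding
$$\|J_2^\varepsilon\|_{L^{s_2}} \leq C\|\rho_\varepsilon+\varepsilon\|_{L^{m+1}}\|\nabla c_\varepsilon\|_{L^{nm/(n-m)}},\qquad \tfrac{1}{s_2}=\tfrac{1}{m+1}+\tfrac{n-m}{nm},$$
i.e., $s_2=\tfrac{nm(m+1)}{nm+(n-m)(m+1)}$. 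The time integrability supplied by Lemma \ref{lemLp} gives $\rho_\varepsilon\in L^{m+1}(0,T;L^{m+1})$ and $\nabla c_\varepsilon\in L^\infty(0,T;L^{nm/(n-m)})$, hence $J_2^\varepsilon\in L^{m+1}(0,T;L^{s_2})$; since $m+1>2$ and $(0,T)$ is bounded, this embeds into $L^2(0,T;L^{s_2})$. Taking $s=\min(s_1,s_2)$ then closes the argument; the inequality $s>1$ follows from $m>1$ for $s_1$ and a short calculation using $n\geq 3$ and $m>\tfrac{2n}{n+2}$ for $s_2$. No real obstacle arises beyond matching H\"older pairings to the already-established uniform bounds; the only delicacy is that the two fluxes demand different interpolations, which is exactly why the final exponent must be the minimum of the two.
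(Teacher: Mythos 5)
Your proposal is correct and follows essentially the same route as the paper: the same splitting of the flux into $\nabla(\rho_\varepsilon+\varepsilon)^m$ and $(\rho_\varepsilon+\varepsilon)\nabla c_\varepsilon$, the same rewriting $\rho_\varepsilon^{m-1}\nabla\rho_\varepsilon=\tfrac{2}{2m-1}\rho_\varepsilon^{1/2}\nabla\rho_\varepsilon^{m-1/2}$ with the H\"older pairing $(2m,2)$ against the bounds of Lemma \ref{lemLp}, the same treatment of the residual $\varepsilon^{m-1}\nabla\rho_\varepsilon$ via Lemma \ref{nablarho}, and the same H\"older pairing $L^{m+1}\times L^{nm/(n-m)}$ for the drift term. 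The only (welcome) addition is that you make explicit the embedding $L^{m+1}(0,T)\hookrightarrow L^2(0,T)$, which the paper leaves implicit.
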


\begin{proof}
By using the weak formulation of the equation, we know the estimate for time derivative $\partial_t \rho_\varepsilon$ can be obtained directly from the estimates on $\nabla(\rho_{\varepsilon}+\varepsilon)^m$ and $(\rho_{\varepsilon}+{\varepsilon})\cdot\nabla c_{\varepsilon}$. We will prove the following facts,
\begin{eqnarray*}
&&\|\nabla(\rho_{\varepsilon}+\varepsilon)^m\|_{L^2(0,T; L^{\frac{2m}{m+1}}(\mathbb{R}^n))}\leq C,\\
&&\|(\rho_{\varepsilon}+{\varepsilon})\cdot\nabla c_{\varepsilon}\|_{L^{m+1}(0,T; L^{\frac{nm(m+1)}{nm+(n-m)(m+1)}}(\mathbb{R}^n))}\leq C.
\end{eqnarray*}

In fact,
\begin{eqnarray}\label{b1}
|\nabla(\rho_{\varepsilon}+\varepsilon)^m|&=&m|(\rho_{\varepsilon}+\varepsilon)^{m-1}|\cdot|\nabla \rho_{\varepsilon}|\nonumber\\
&\leq& m|(\rho_{\varepsilon}^{m-1}+\varepsilon^{m-1})|\cdot|\nabla \rho_{\varepsilon}|\leq  |\nabla\rho_{\varepsilon}^{m}|+m\varepsilon^{m-1}|\nabla \rho_{\varepsilon}|.
\end{eqnarray}
By writing
$$
|\nabla\rho_{\varepsilon}^m|=\big|\dfrac{2m}{2m-1}\rho_{\varepsilon}^{1/2}\nabla \rho_{\varepsilon}^{m-1/2}\big|,
$$
H\"older inequality and lemma \ref{lemLp}, we have
\begin{eqnarray*}
\dint_{\mathbb{R}^n}|\nabla\rho_{\varepsilon}^m|^{\frac{2m}{m+1}}\leq C \Big(\dint_{\mathbb{R}^n}\rho_{\varepsilon}^m\Big)^{\frac{1}{m+1}} \Big(\dint_{\mathbb{R}^n}|\nabla\rho_{\varepsilon}^{m-1/2}|^2\Big)^{\frac{m}{m+1}}.
\end{eqnarray*}
Therefore,
\begin{eqnarray*}
\dint^T_0 \|\nabla\rho_{\varepsilon}^m\|_{L^{\frac{2m}{m+1}}}^2\leq \dint^T_0 \|\rho_{\varepsilon}\|_{L^m}\|\nabla\rho_{\varepsilon}^{m-1/2}\|_{L^2}^2 dt \leq C,
\end{eqnarray*}
i.e.,
\begin{eqnarray}
\|\nabla \rho_{\varepsilon}^m\|_{L^2(0,T; L^{\frac{2m}{m+1}}(\mathbb{R}^n))}\leq C.\label{est26}
\end{eqnarray}
By Lemma \ref{nablarho}, since $\frac{2m}{m+1}<\min\{2,\frac{2m}{3-m}\}$ and (\ref{est26}), we know that
\begin{eqnarray*}
\nabla (\rho_{\varepsilon}+\varepsilon)^m\in L^2(0,T; L_{\rm loc}^{ \frac{2m}{m+1}}(\mathbb{R}^n)).
\end{eqnarray*}
As a direct consequence of Lemma \ref{lemLp}, we have
\begin{eqnarray}
\|\rho_{\varepsilon}\cdot\nabla c_{\varepsilon}\|_{L^{m+1}(0,T; L^{\frac{nm(m+1)}{nm+(n-m)(m+1)}}(\mathbb{R}^n))}\leq C,\label{est27}
\end{eqnarray}
where $\frac{nm(m+1)}{nm+(n-m)(m+1)}>1$ due to $\frac{2n}{n+2}<m<2-\frac{2}{n}$.
By Lemma \ref{lemLp} with (\ref{est27}) and noticing $\frac{nm(m+1)}{nm+(n-m)(m+1)}\in (\frac{n}{n-1},\frac{mn}{n-m}]$, we get
\begin{eqnarray*}
\|(\rho_{\varepsilon}+{\varepsilon})\cdot\nabla c_{\varepsilon}\|_{L^{m+1}(0,T; L^{\frac{nm(m+1)}{nm+(n-m)(m+1)}}(\mathbb{R}^n))}\leq C.
\end{eqnarray*}
\end{proof}

\section{Blow up of the solution}
In this section, we will discuss the blow-up of the solution when $\|\rho_0\|_{L^{\frac{2n}{n+2}}}>(s^*)^{\frac{n-2}{2n(m-1)}}$ and ${\mathcal F}(\rho_0)< {\mathcal F}^*:=f(s^*)$. Before we prove the result of blow-up, we need to give a key lemma that shows in the cases of subcritical initial data, the quantity $\|\rho\|_{L^{\frac{2n}{n+2}}}$ can be bounded from below.

\subsection{Lower bound of $\|\rho\|_{L^{\frac{2n}{n+2}}}$}
$\newline$
Similar to the decomposition of free energy of the regularized problem,
we can separate the free energy into two parts by using Haddy-Littlewood-Sobolev inequality \eqref{HLSinequatlity}
\begin{eqnarray*}
{\mathcal F}(\rho)&=&\dfrac{1}{m-1}\int_{\mathbb R^n}\rho^m( x,t)dx-\dfrac{C(n)}{2(n-2)n\alpha(n)}\|\rho\|^2_{L^{\frac{2n}{n+2}}}\\
&&+\dfrac{C(n)}{2(n-2)n\alpha(n)}\|\rho\|^2_{L^{\frac{2n}{n+2}}}-\dfrac{1}{2(n-2)n\alpha(n)}\int\int_{\mathbb R^n\times\mathbb R^n}\dfrac{\rho(x,t)\rho(y,t)}{|x-y|^{n-2}}d x dy\\
&=:& {\mathcal F}_1(\rho)+{\mathcal F}_2(\rho).
\end{eqnarray*}
Proposition \ref{propHLS} says that that ${\mathcal F}_2(\rho)\geq 0$.

Due to $m>\dfrac{2n}{n+2}$, interpolation tells us
\begin{eqnarray*}
\|\rho\|_{L^{\frac{2n}{n+2}}}\leq \|\rho\|^{1-\theta}_{L^1}\|\rho\|^{\theta}_{L^{m}}, \quad \theta=\frac{m(n-2)}{2n(m-1)}.
\end{eqnarray*}

Thus the first part of the free energy is
\begin{eqnarray*}
{\mathcal F}_1(\rho)&=&\dfrac{1}{m-1}\int_{\mathbb R^n}\rho^m( x,t)dx-\dfrac{C(n)}{2(n-2)n\alpha(n)}\|\rho\|^2_{L^{\frac{2n}{n+2}}}\nonumber\\
&\geq& \dfrac{1}{m-1}\|\rho\|^{\frac{(\theta-1)m}{\theta}}_{L^1}\|\rho\|^{\frac{m}{\theta}}_{L^{\frac{2n}{n+2}}}
-\dfrac{C(n)}{2(n-2)n\alpha(n)}\|\rho\|^2_{L^{\frac{2n}{n+2}}}\\
\end{eqnarray*}
\begin{eqnarray*}
&\geq& \dfrac{1}{m-1}M_0^{\frac{2n-m(n+2)}{n-2}}\|\rho\|^{\frac{2n(m-1)}{n-2}}_{L^{\frac{2n}{n+2}}}
-\dfrac{C(n)}{2(n-2)n\alpha(n)}\|\rho\|^2_{L^{\frac{2n}{n+2}}}. \nonumber
\end{eqnarray*}

According to the previous analysis, let
$$
f(s)=\dfrac{1}{m-1}M_0^{\frac{2n-m(n+2)}{n-2}}s-\dfrac{C(n)}{2(n-2)n\alpha(n)}s^{\frac{n-2}{n(m-1)}}.
$$
We now have a lower bound of the first part of free energy, i.e. $f\Big(\|\rho\|^{\frac{2n(m-1)}{n-2}}_{L^{\frac{2n}{n+2}}}\Big)\leq{\mathcal F}_1(\rho)$.

\begin{lem} \label{lowbounded}
If the initial free energy ${\mathcal F}(\rho_0)<{\mathcal F}^*:=f(s^*)$ and $\|\rho_0\|_{L^{\frac{2n}{n+2}}}>(s^*)^{\frac{n-2}{2n(m-1)}}$, let $\rho(x,t)$ be a solution of problem \eqref{equs},
 then there exists a constant $\mu_2>1$ such that
$$
\|\rho(\cdot, t)\|_{\frac{2n}{n+2}}> (\mu_2s^*)^{\frac{n-2}{2n(m-1)}}, \mbox{ for all } t>0,
$$
where $s^*$ is the maximum point of $f(s)$:
\begin{eqnarray*}
s^*=\Big(\dfrac{2n^2\alpha(n)M_0^{\frac{2n-m(n+2)}{n-2}}}{C(n)}\Big)^{\frac{n(m-1)}{2n-2-mn}}.
\end{eqnarray*}
\end{lem}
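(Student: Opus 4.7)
The plan is to mirror the concavity argument of Lemma \ref{upbounded} above, but working on the decreasing branch of $f$ instead of the increasing one. Everything preceding the lemma in this subsection has already re-derived, on the actual (non-regularized) solution, the key chain
\begin{equation*}
f\!\left(\|\rho(\cdot,t)\|_{L^{\frac{2n}{n+2}}}^{\frac{2n(m-1)}{n-2}}\right)\;\leq\;\mathcal{F}_1(\rho(\cdot,t))\;\leq\;\mathcal{F}(\rho(\cdot,t))\;\leq\;\mathcal{F}(\rho_0),
\end{equation*}
where $\mathcal{F}_2\geq 0$ by Proposition \ref{propHLS}, the bound on $\mathcal{F}_1$ uses the interpolation $\|\rho\|_{L^{2n/(n+2)}}\leq\|\rho\|_{L^1}^{1-\theta}\|\rho\|_{L^m}^{\theta}$ with $\theta=\frac{m(n-2)}{2n(m-1)}$ together with mass conservation $\|\rho\|_{L^1}=M_0$, and the last inequality is the monotone decreasing property of $\mathcal{F}$ along the flow.

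Next I would analyze the real variable function $f$. Since $m<2-\tfrac{2}{n}$ forces $\tfrac{n-2}{n(m-1)}>1$, $f$ is strictly concave on $(0,\infty)$, with unique maximum at $s^*$ satisfying $f(s^*)=\mathcal{F}^*$; it is strictly increasing on $(0,s^*)$ and strictly decreasing on $(s^*,\infty)$. Using the strict inequality $\mathcal{F}(\rho_0)<f(s^*)$, I fix $\delta<1$ with $\mathcal{F}(\rho_0)<\delta f(s^*)$; the level equation $f(s)=\delta f(s^*)$ then has exactly two roots $s_-<s^*<s_+$, and I set $\mu_2:=s_+/s^*>1$. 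By construction every $s\in(s_-,s_+)$ satisfies $f(s)>\delta f(s^*)>\mathcal{F}(\rho_0)$.

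To close the argument, write $s(t):=\|\rho(\cdot,t)\|_{L^{2n/(n+2)}}^{2n(m-1)/(n-2)}$. The hypothesis $\|\rho_0\|_{L^{2n/(n+2)}}>(s^*)^{(n-2)/(2n(m-1))}$ gives $s(0)>s^*$, while the chain above evaluated at $t=0$ yields $f(s(0))\leq \mathcal{F}(\rho_0)<f(s_+)$; strict monotonicity of $f$ on $(s^*,\infty)$ then upgrades this to $s(0)>s_+$. For any later $t$, the same inequality $f(s(t))\leq \mathcal{F}(\rho_0)<\delta f(s^*)$ forces $s(t)\notin(s_-,s_+)$, so $s(t)$ lies in the disjoint union $(0,s_-]\cup[s_+,\infty)$. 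Using continuity in $t$ of $\|\rho(\cdot,t)\|_{L^{2n/(n+2)}}$, the trajectory must stay in the same connected component as $s(0)$, whence $s(t)\geq s_+=\mu_2 s^*$ for all $t>0$, which is exactly the claim.

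The only genuine obstacle is the continuity-in-time of $t\mapsto s(t)$, which is needed to rule out an instantaneous jump between the two admissible components. For a weak solution of \eqref{equs} this is not automatic, but it should follow from combining the $L^\infty_tL^m_x$ control together with the time-regularity $\partial_t\rho\in L^2(0,T;W^{-1,s}_{\mathrm{loc}})$ of Lemma \ref{time derivative}, and then interpolating the $L^{2n/(n+2)}$ norm between $L^1$ (constant in $t$ by mass conservation) and $L^m$. Once this continuity is in hand, the remainder of the argument is a verbatim mirror of Lemma \ref{upbounded} with "increasing" replaced by "decreasing" on the relevant branch of $f$.
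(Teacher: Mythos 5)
Your argument is essentially the paper's own proof: the same free-energy decomposition via the Hardy--Littlewood--Sobolev inequality, the same concave function $f$ with maximum at $s^*$, the same choice of $\delta<1$ with $\mathcal{F}(\rho_0)<\delta f(s^*)$, and the same monotonicity argument on the branch $s>s^*$ (the paper's phrase ``increasing in $s>s^*$'' is a typo for decreasing). The one place you go beyond the paper is in explicitly flagging the continuity in $t$ of $\|\rho(\cdot,t)\|_{L^{2n/(n+2)}}$ needed to keep the trajectory in the component $[s_+,\infty)$; the paper leaves this step implicit, so your remark is a clarification rather than a different route.
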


\begin{proof}
Notice that $1<m<2-\frac{2}{n}$ implies $\frac{n-2}{n(m-1)}>1$, we know that $f(s)$ is a strictly concave function in $0<s<\infty$. Directly calculation shows that
$$
f'(s)=\dfrac{1}{m-1}M_0^{\frac{2n-m(n+2)}{n-2}}-\dfrac{C(n)}{2(n-2)n\alpha(n)}\frac{n-2}{n(m-1)}s^{\frac{2n-2-mn}{n(m-1)}}.
$$
As a consequence, $s^*$ is a unique maximum point of $f(s)$.
Therefore the important property of $f$ is that $f(s)$ is monotone increasing for $0<s<s^*$, while $f(s)$ is monotone decreasing for $s>s^*$.

In the case that initial free energy ${\mathcal F}(\rho_0)<f(s^*)$, we can make it even smaller, i.e. there is a $\delta<1$ such that ${\mathcal F}(\rho_0)<\delta f(s^*)$.

Combining all the facts we know, including the interpolation, Haddy-Littlewood-Sobolev inequality and the monotonicity of free energy, we have
\begin{eqnarray*}
f\Big(\|\rho\|^{\frac{2n(m-1)}{n-2}}_{L^{\frac{2n}{n+2}}}\Big)\leq{\mathcal F}_1(\rho)\leq {\mathcal F}(\rho)\leq {\mathcal F}(\rho_0)<\delta f(s^*).
\end{eqnarray*}
If initially $\|\rho_0\|^{\frac{2n(m-1)}{n-2}}_{L^{\frac{2n}{n+2}}}>s^*$, due to the fact that $f(s)$ is increasing in $s>s^*$, there exists a $\mu_2>1$ such that  $\|\rho\|^{\frac{2n(m-1)}{n-2}}_{L^{\frac{2n}{n+2}}}>\mu_2 s^*$.
\end{proof}

\subsection{Time derivative of second moment}$\newline$
In this subsection, we will focus on studying the time evolution of the second moment. The following lemma is obtained from Lemma \ref{lowbounded}.
\begin{lem}\label{second moment}
If ${\mathcal F}(\rho_0)<{\mathcal F}^*:=f(s^*)$ and $\|\rho_0\|_{\frac{2n}{n+2}}> (s^*)^{\frac{n-2}{2n(m-1)}}$, $\rho$ is a solution of \eqref{equs}, then
\begin{eqnarray}
\dfrac{dm_2(t)}{dt} <0.
\end{eqnarray}
\end{lem}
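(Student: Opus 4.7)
The plan is to compute $\frac{d m_2}{dt}$ explicitly, rewrite it so that the free energy appears, and then use the two assumptions (the upper bound $\mathcal{F}(\rho)\leq\mathcal{F}(\rho_0)<\mathcal{F}^*=f(s^*)$ coming from dissipation of the free energy, and the lower bound on $\|\rho\|_{L^{2n/(n+2)}}$ from Lemma \ref{lowbounded}) to conclude strict negativity. The algebraic miracle will be that both bounds end up being expressed through the same quantity $s^{*} M_0^{(2n-m(n+2))/(n-2)}$, and the factor $\mu_2>1$ from Lemma \ref{lowbounded} beats the factor $1$ from $\mathcal{F}^*$.

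First I would differentiate $m_2(t)=\int_{\mathbb R^n}|x|^2 \rho\,dx$ in time, use the equation, and integrate by parts twice. Since $\Delta(|x|^2)=2n$ and $\nabla(|x|^2)=2x$, this gives
\begin{equation*}
\frac{d m_2}{dt}=2n\int_{\mathbb R^n}\rho^m\,dx+2\int_{\mathbb R^n} x\cdot\rho\nabla c\,dx.
\end{equation*}
Using $\nabla c(x)=-\frac{1}{n\alpha(n)}\int\frac{x-y}{|x-y|^n}\rho(y)\,dy$ and symmetrizing in $(x,y)$ collapses the drift term to $-\frac{1}{n\alpha(n)}\iint \frac{\rho(x)\rho(y)}{|x-y|^{n-2}}\,dx\,dy$. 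Next, I would replace the double integral using the definition of the free energy, namely $\iint \frac{\rho(x)\rho(y)}{|x-y|^{n-2}}\,dx\,dy = n(n-2)\alpha(n)\bigl(\frac{2}{m-1}\int\rho^m - 2\mathcal{F}(\rho)\bigr)$, which yields the clean identity
\begin{equation*}
\frac{d m_2}{dt}=\frac{2(nm-2n+2)}{m-1}\int_{\mathbb R^n}\rho^m\,dx+2(n-2)\,\mathcal{F}(\rho).
\end{equation*}
Since $m<2-\tfrac{2}{n}$, the coefficient of $\int\rho^m$ is strictly negative, so I want $\int\rho^m$ to be \emph{large} and $\mathcal{F}(\rho)$ to be \emph{not too large}.

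The upper bound is immediate from dissipation: $\mathcal{F}(\rho(\cdot,t))\leq \mathcal{F}(\rho_0)<\mathcal{F}^*=f(s^*)$, and a direct computation (using $f'(s^*)=0$) gives the explicit form $f(s^*)=\frac{2n-2-nm}{(n-2)(m-1)}\, s^{*} M_0^{(2n-m(n+2))/(n-2)}$. For the lower bound on $\int\rho^m$ I would use the interpolation $\|\rho\|_{L^{2n/(n+2)}}\leq M_0^{1-\theta}\|\rho\|_{L^m}^{\theta}$ with $\theta=\tfrac{m(n-2)}{2n(m-1)}$ already used in Step 1, solving for $\|\rho\|_{L^m}^m$ to get $\int\rho^m\geq M_0^{(2n-m(n+2))/(n-2)}\|\rho\|_{L^{2n/(n+2)}}^{2n(m-1)/(n-2)}$. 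Lemma \ref{lowbounded} then gives $\|\rho\|_{L^{2n/(n+2)}}^{2n(m-1)/(n-2)}>\mu_2 s^*$ with $\mu_2>1$, so $\int\rho^m > \mu_2\, s^{*} M_0^{(2n-m(n+2))/(n-2)}$.

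Plugging both bounds into the identity for $\frac{d m_2}{dt}$, the common factor $s^{*} M_0^{(2n-m(n+2))/(n-2)}$ collects and I would obtain
\begin{equation*}
\frac{d m_2}{dt}<\frac{2\, s^{*} M_0^{(2n-m(n+2))/(n-2)}}{m-1}\,(2n-2-nm)(1-\mu_2)<0,
\end{equation*}
which is the desired conclusion. The one delicate point I anticipate is justifying the formal differentiation of $m_2$ and the symmetrization of the drift term at the regularity level of the weak solutions constructed in Section 2; this is usually handled either by passing to the limit from the regularized problem \eqref{regularized equa} (where $\rho_\varepsilon$ is smooth and the computation is legal), or by a cut-off in $|x|$ combined with the finiteness of the second moment assumed in Theorem \ref{thm}(2). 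Apart from that bookkeeping, the only algebraic subtlety is checking that the two appearances of $s^{*} M_0^{(2n-m(n+2))/(n-2)}$ really match, which comes straight from the definition $f'(s^*)=0$ and hence is automatic.
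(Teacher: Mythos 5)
Your proposal is correct and follows essentially the same route as the paper: the identity $\frac{dm_2}{dt}=\bigl(2n-\frac{2(n-2)}{m-1}\bigr)\int\rho^m+2(n-2)\mathcal{F}(\rho)$ (your coefficient $\frac{2(nm-2n+2)}{m-1}$ is the same number), then the interpolation lower bound on $\int\rho^m$, Lemma \ref{lowbounded} with $\mu_2>1$, the monotonicity of $\mathcal{F}$, and the identity $f(s^*)=\frac{2n-2-nm}{(n-2)(m-1)}s^*M_0^{\frac{2n-m(n+2)}{n-2}}$ coming from $f'(s^*)=0$, yielding exactly the paper's final expression $\frac{2}{m-1}(nm-2n+2)(\mu_2-1)s^*M_0^{\frac{2n-m(n+2)}{n-2}}<0$. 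The only difference is that you derive the second-moment identity explicitly and flag the regularity issue, whereas the paper states it ``by direct calculation''; both of these are welcome but not a change of method.
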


\begin{proof}
By direct calculation, we have
\begin{eqnarray*}
\dfrac{dm_2(t)}{dt}=\Big(2n-\dfrac{2(n-2)}{m-1}\Big) \dint_{\mathbb{R}^n}\rho^m dx +2(n-2)\mathcal{F}(\rho).
\end{eqnarray*}
The restriction on $m<2-\frac{2}{n}$ gives that $2n-\frac{2(n-2)}{m-1}<0$. Then by using interpolation inequality, the decreasing properties of free energy and Lemma \ref{lowbounded} with $\mu_2>1$, we have
\begin{eqnarray*}
\dfrac{dm_2(t)}{dt}&\leq & \Big(2n-\dfrac{2(n-2)}{m-1}\Big) M_0^{\frac{(\theta-1)m}{\theta}} \|\rho\|_{L^{\frac{2n}{n+2}}}^{\frac{m}{\theta}} +2(n-2)\mathcal{F}(\rho_0)\\
&<& \Big(2n-\dfrac{2(n-2)}{m-1}\Big) M_0^{\frac{(\theta-1)m}{\theta}} \mu_2 s^* +2(n-2)f(s^*)\\
&=& \Big(2n-\dfrac{2(n-2)}{m-1}\Big) M_0^{\frac{(\theta-1)m}{\theta}} (\mu_2-1) s^*+\Big(2n-\dfrac{2(n-2)}{m-1}\Big) M_0^{\frac{(\theta-1)m}{\theta}} s^* \\
&&\hspace{2cm}+2(n-2)\Big(\dfrac{1}{m-1}M_0^{\frac{(\theta-1)m}{\theta}}s^* -\dfrac{C(n)}{2(n-2)n\alpha(n)}(s^*)^{\frac{2\theta}{m}}\Big)\\
&=& \Big(2n-\dfrac{2(n-2)}{m-1}\Big) M_0^{\frac{(\theta-1)m}{\theta}} (\mu_2-1) s^*+2n M_0^{\frac{(\theta-1)m}{\theta}}s^* -\dfrac{C(n)}{n\alpha(n)}(s^*)^{\frac{2\theta}{m}}\\
&=&\Big(2n-\dfrac{2(n-2)}{m-1}\Big) M_0^{\frac{(\theta-1)m}{\theta}} (\mu_2-1) s^*<0.
\end{eqnarray*}
where the last second equation follows from the definition of $s^*$.
\end{proof}

\subsection{The proof on the blow-up result in Theorem \ref{thm}}$\newline$
From Lemma \ref{second moment}, we know that there exists a finite time $T$ such that
$$
\lim_{t\rightarrow T} m_2(t)=0.
$$

The relation between the second moment and $L^m$ norm of $\rho$ can be obtained by using H\"{o}lder's inequality, $\forall R>0$, we have
 \begin{eqnarray*}
&&\int_{ {\mathbb R}^n}\rho( x)d x\leq\int_{B_R}\rho( x)d x+\int_{B^c_R}\rho( x)d x
\leq CR^{n(m-1)/m}\|\rho\|_{L^m}+\frac{1}{R^2}m_2(t).
 \end{eqnarray*}
Now by choosing $R=(\frac{m_2(t)}{C\|\rho\|_{L^m}})^{\frac{m}{(m-1)n+2m}}$, we have
$$\|\rho\|_{L^1}\leq C\|\rho\|^{\frac{2m}{(m-1)n+2m}}_{L^m}m_2(t)^{\frac{n(m-1)}{(m-1)n+2m}}.$$
Consequently, there exists $T^*\leq T$ such that $\lim_{t\rightarrow T^*}\|\rho\|_{L^m}=\infty$.

\section*{Appendix}

In Example \ref{example1}, we gave an initial data of the system with small mass and showed that the solution must blow up in finite time according to the main result of this paper. Here in this appendix, we will give a detailed calculation for the quantities appeared in Example \ref{example1} to make sure that the assumptions in theorem \ref{thm} satisfied.

For given $\varepsilon_0>0$ small, let the initial data be
 \begin{eqnarray}\label{rho0}
\rho_0(x)=\left\{\begin{array}{ll} \varepsilon_0 \frac{K^n}{\alpha(n)}, & |x|\leq \frac{1}{K},\\ 0, & |x|>\frac{1}{K},\end{array}\right.
 \end{eqnarray}
where $\alpha(n)$ is the volume of $n$ dimensional unit ball, and $K$ will be determined later.

First of all, since $\|\rho_0\|_{L^{\frac{2n}{n+2}}}=\varepsilon_0 \Big(\frac{K^n}{\alpha(n)}\Big)^{\frac{n-2}{2n}}$, to prove \eqref{examrelmass}, i.e. $\|\rho_0\|_{L^{\frac{2n}{n+2}}}>(s^*)^{\frac{n-2}{2n(m-1)}}$, it is necessary to show
\begin{eqnarray}\label{relation2}
 \varepsilon_0^{1+\frac{m(n+2)-2n}{2(2n-2-mn)}} K^{\frac{n-2}{2}}>(\alpha(n))^{\frac{n-2}{2n}}\Big(\frac{2n^2\alpha(n)}{C(n)}\Big)^{\frac{n-2}{2(2n-2-mn)}}.
\end{eqnarray}
Notice that $n>2$, there exists a constant $K_1>0$ such that for all $K>K_1$, the formula (\ref{relation2}) is true.

The corresponding initial free energy is
\begin{eqnarray*}
&&{\mathcal F}(\rho_0)=\frac{1}{m-1}\int_{\mathbb{R}^n}\rho_0^m dx-\frac{1}{2(n-2)n\alpha(n)}\int\int_{\mathbb{R}^n\times\mathbb{R}^n}\frac{\rho_0(x)\rho_0(y)}{|x-y|^{n-2}}dxdy\\
&=&\frac{1}{m-1}\int_{|x|\leq \frac{1}{K}}\varepsilon_0^m\big(\frac{K^n}{\alpha(n)}\big)^mdx-\frac{1}{2(n-2)n\alpha(n)}\int_{|x|\leq \frac{1}{K}}\int_{|y|\leq \frac{1}{K}}\frac{\big(\varepsilon_0\frac{K^n}{\alpha(n)}\big)^2}{|x-y|^{n-2}}dxdy
\\
&\leq& \frac{\varepsilon_0^m }{m-1}K^{n(m-1)}(\alpha(n))^{1-m}-\frac{1}{2(n-2)n\alpha(n)}\int_{|x|\leq \frac{1}{K}}\int_{|y|\leq \frac{1}{K}}\frac{\big(\varepsilon_0\frac{K^n}{\alpha(n)}\big)^2}{(|x|+|y|)^{n-2}}dxdy\\
&\leq& \frac{\varepsilon_0^m }{m-1}K^{n(m-1)}(\alpha(n))^{1-m}-\frac{1}{2(n-2)n\alpha(n)}\int_{|x|\leq \frac{1}{K}}\int_{|y|\leq \frac{1}{K}}\frac{\big(\varepsilon_0\frac{K^n}{\alpha(n)}\big)^2}{(\frac{2}{K})^{n-2}}dxdy\\
&=& \frac{\varepsilon_0^m }{m-1}K^{n(m-1)}(\alpha(n))^{1-m}-\frac{2^{2-n}}{2(n-2)n\alpha(n)}\varepsilon_0^2 K^{n-2}.
\end{eqnarray*}
To show that \eqref{examrelentropy} is true, it is necessary to show that
 \begin{eqnarray}\label{relation1}
 &&\varepsilon_0^{m+\frac{m(n+2)-2n}{2n-2-mn}} K^{n(m-1)}(\alpha(n))^{1-m}\nonumber\\
 &<&\frac{(m-1)2^{2-n}}{2(n-2)n\alpha(n)}\varepsilon_0^{2+\frac{m(n+2)-2n}{2n-2-mn}} K^{n-2}+\frac{2-\frac{2}{n}-m}{1-\frac{2}{n}}\Big(\frac{2n^2\alpha(n)}{C(n)}\Big)^{\frac{n(m-1)}{2n-2-mn}}.
\end{eqnarray}
Notice that $m<2-\frac{2}{n}$ implies $n(m-1)<n-2$. Thus there exists a constant $K_2>0$ such that when $K>K_2$, (\ref{relation1}) holds.

Hence taking $K_0=\max\{K_1,K_2\}$, we know that when $K>K_0$, the initial data satisfies blow-up condition in Theorem \ref{thm}.


\begin{thebibliography}{}
\bibitem{BL}
S. Bian, J.-G. Liu, Dynamic and steady states for multi-dimensional Keller-Segel model with diffusion exponent $m\geq 0$, Comm. Math. Phy., 323 (2013) 1017--1070.

\bibitem{BCC10}
A. Blanchet, E. A. Carlen and J. A. Carrillo, Functional inequalities, thick tails and asymptotics
for the critical mass Patlak-Keller-Segel model, J.~Funct.~Anal., 262 (2012) 2142--2230.

\bibitem{BCL09}
A. Blanchet, J. A. Carrillo, and P. Laurencot, Critical mass for a Patlak-Keller-Segel model with
degenerate diffusion in higher dimensions, Calc. Var., 35 (2009) 133--168.

\bibitem{BCM08}
A. Blanchet, J. A. Carrillo and N. Masmoudi, Infinite time aggregation for the critical Patlak-Keller-Segel model in ${\mathbb R}^2$, Comm. Pure Appl. Math., 61 (2008), No. 10, 1449--1481.

\bibitem{BDP} A. Blanchet, J. Dolbeault, B. Perthame. Two-dimensional Keller-Segel model: optimal critical mass and qualitative properties of the solutions. Electronic Journal of Differential Equations, 2006, 44.

\bibitem{CLW1}
L. Chen, J.-G. Liu and J. H. Wang, Multi-dimensional Degenerate Keller-Segel system with a new diffusion exponent $2n/(n+2)$, SIAM J. Math. Anal., 44 (2012) 1077--1102.

\bibitem{Childress}
S. Childress, Chemotactic collapse in two dimensions, Lecture Notes in Biomathematics, 55 (1984) 61--68.

\bibitem{CP}
L. Corrias, B. Perthame, Asymptotic decay for the solutions of the parabolic每parabolic Keller每Segel chemotaxis system in
critical spaces, Math. Comput. Modelling, 47 (2008) 755每764.

\bibitem{CPZ}
 L. Corrias, B. Perthame, H. Zaag, Global solutions of some chemotaxis and angiogenesis systems in high space dimensions,
Milan J. Math. 72 (2004) 1每28.

\bibitem{DP}
 J. Dolbeault, B. Perthame, Optimal critical mass in the two dimensional Keller-Segel model in $\mathbb{R}^2$,  C. R. Math. Acad. Sci. Paris, 339 (2004) 611--616.

\bibitem{H1}
 D. Horstmann, From 1970 until now: The Keller-Segel model in chemotaxis and its consequences I, Jahresberichte der DMV, 105 (2003) 103--165.

\bibitem{H2}
 D. Horstmann, From 1970 until now: The Keller-Segel model in chemotaxis and its consequences II, Jahresberichte der DMV, 106 (2004) 51--69.

\bibitem{KS1}
 E. F. Keller, L. A. Segel, Initiation of slime mold aggregation viewed as an instability,
J. Theor. Biology, 26 (1970) 399--415.

\bibitem{KS2}
 E. F. Keller, L. A. Segel, Model for chemotaxis, J. Theor. Biology, 30 (1971) 225--234.

\bibitem{LiebAnalysis}
E. H. Lieb and M. Loss, Analysis, Graduate Studies in Mathematics, {\bf 14}, American
Mathematical Society Providence, Rhode Island, 2nd edition. 2001.

\bibitem{Patlak}
C. S. Patlak, Random walk with persistenc and external bias, Bull. Math. Biol. Biophys., 15 (1953) 311--338.

\bibitem{Perthamebook} B. Perthame, Transport Equations in
Biology, Birkhaeuser Verlag, Basel-Boston-Berlin, 2007.

\bibitem{S1}
Y. Sugiyama, Global existence in sub-critical cases and finite time blow-up in super-critical cases to degenerate Keller-Segel systems, Diff. Int. Eqns, 19 (2006) 841--876.



\bibitem{SK}
Y. Sugiyama and H. Kunii, Global existence and decay properties for a degenerate Keller-Segel model with a power factor in drift term, J. Differential Equations, 227 (2006) 333--364.
\bibitem{Winkler}
M. Winkler, Aggregation vs. global diffusive behavior
in the higher-dimensional Keller-Segel model, J. Diff. Eqs. 248 (2010) 2889--2905.

\end{thebibliography}
\end{document}